\title{Estimation of road traffic state at a multi-lanes controlled junction}
\author{
\IEEEauthorblockN{Cyril Nguyen Van Phu\IEEEauthorrefmark{1} and Nadir Farhi} \\
\IEEEauthorblockA{COSYS-GRETTIA, Univ Gustave Eiffel, IFSTTAR, F-77454 Marne-la-Vall\'ee, France} \\
\IEEEauthorblockA{\IEEEauthorrefmark{1} corresponding author}
\thanks{
Corresponding author : Cyril Nguyen Van Phu (email: cyril.nguyen-van-phu@univ-eiffel.fr)}
}
\newtheorem{prop}{Proposition}
\DeclareMathOperator{\tmid}{mid}
\begin{document}
\maketitle
\begin{abstract}
 We present in this paper a method for the estimation of traffic state at road junctions controlled with traffic lights.
 We assume mixed traffic where a proportion of vehicles are equipped with communication resources.
 The estimation of road traffic state uses information given by communicating vehicles.
 The method we propose is built upon a previously published method which was applied to estimate the traffic in the case where roads are composed of two lanes.
 In this paper, we consider the case where roads are composed of three lanes and we show that this solution can address the general case,
 where roads are composed of any number of lanes.
 We assume the geometry of the road junction is known, as well as its connections between incoming and outgoing lanes and roads.
 Using the location data provided by the communicating vehicles, first, we estimate some primary parameters including
the penetration ratio of the probe vehicles,
 as well as the arrival rates of vehicles (equipped and non-equipped) per lane by introducing the assignment onto the lanes.
 Second, we give estimations of the queue length of the $3$-lanes road, without and with the additional information provided by the location of the 
 communicating vehicles in the queue.
 We illustrate and discuss the proposed model with numerical simulations.
\end{abstract}

\begin{IEEEkeywords}
Intelligent transportation systems, Road transportation, Queuing systems.
\end{IEEEkeywords}

\section{Introduction and state of the art}
\subsection{Introduction}
\label{sec:intro}
Limited capacities of roads and junctions, combined with traffic demand, determine the road traffic conditions experimented by the users in daily life.
Road traffic can then be modeled by shared resources systems such as queuing systems, as it has been done for example in~\cite{VARAIYA2013177},
where the max pressure algorithm
adapted to road traffic is presented.
In order to improve road traffic conditions experimented by the users, there is the possibility to control road traffic by guiding the users in the network,
or by controlling the traffic lights to reduce the delays.
However, controlling the road traffic needs an information on the state of the traffic.
In particular, concerning the road traffic state we are interested in the queue lengths at the junctions.

Nowadays, road traffic can be probed from the inside with communicating vehicles equipped with localization capabilities.
A probe vehicle is a vehicle which uses wireless communication to send information to another vehicle (vehicle to vehicle V2V), the infrastructure (vehicle to infrastructure V2I)
or to any other device (V2X).
The data provided by these mobile sensors are quite different in their nature from the data provided by fixed sensors.
With probe vehicles, we get data concerning individual sample vehicles trajectories, rather than global information on traffic state at a fixed location.
This difference in the nature of the information furnished, raises the opportunity to develop new methods for queue length estimation.
For example, in \cite{COMERT2016502} a stochastic method has been proposed to evaluate road traffic parameters like the queue lengths with probe vehicles.
In \cite{9011732}, we have presented a method to estimate road traffic state at controlled junctions for the two-lanes roads which extended the method in \cite{COMERT2016502}.

In the present paper, we aim at generalizing the method published in~\cite{9011732} to roads composed of any number of lanes.
This includes estimating road traffic primary parameters such as the penetration ratio of equipped vehicles, the arrival rate of vehicles and
the queue lengths per lane at the junction. We manage to do these estimations by introducing the assignment of vehicles onto the lanes. 

The outline of the paper is as follows : after the introduction~\ref{sec:intro}, we give a brief state of the art concerning the estimation of road traffic in subsection~\ref{sub-state-art}.
In section~\ref{sec-traffic-state} we introduce the problem and the notations used in the paper as well as the main estimation model.
In subsection~\ref{sub-primary} we introduce the assignment onto the lanes in order to balance the queues as much as possible and derive
primary parameters such as the penetration ratio of equipped vehicles, the arrival rate of vehicles per lane, the total arrival rate of vehicles, the probability distribution of
the queue lengths per lane.
In subsection~\ref{sub-three-lanes}, we give three different probability distributions for the queue lengths on a three lanes road.
By computing the expectations for these probability distributions we estimate the queue lengths at the traffic light.
The method presented in the present paper is general for any number of lanes. We show in section~\ref{sub-n-lanes} that the estimation of the queue lengths of three lanes roads is sufficient to
address the general n-lanes roads case. In section~\ref{sec:numerical}, we perform some numerical simulations that we have conducted with Omnet++~\cite{Varga01theomnet++}, 
a discrete event simulator, and analyze the results.
We conclude in section~\ref{sec-conclusion}.

\subsection{State of the art}
\label{sub-state-art}
\subsubsection{Road traffic estimation with fixed sensors (historical approaches)}
Historically, the estimation of road traffic state was done using sensors placed at fixed locations, such as magnetic loops, 
piezoelectric sensors or video cameras~\cite{leduc2008road}.
Among these approaches with fixed sensors, some estimations of the queue length and of the delay of vehicles at a traffic light have been given
in~\cite{beckmann1955studies}~\cite{webster1958traffic}~\cite{mcneil1968solution}~\cite{miller1968capacity}~\cite{akccelik1980time}.
These papers give analytic formulations for the under-saturated and over-saturated (i.e. when the arrival flow exceeds the intersection capacity) cases.
Among the input-output class of methods, we also cite~\cite{RePEc:eee:transb:v:44:y:2010:i:1:p:120-135}
where a probabilistic model for the estimation of queue lengths 
at signalized junctions, which can capture spillback and gridlock phenomena, is presented.
In 2009, the authors of~\cite{LIU2009412} take advantage of shockwave traffic theory, combined with fixed detector and signal timings data input, in order to estimate queue lengths.

\subsubsection{Road traffic estimation with probe vehicles}
In this paragraph, we give an overview of the methods that have been developed upon the information given by these new mobile sensors, namely, probe vehicles.
In~\cite{GUO2019313}, the authors provide a timely survey on traffic information collection and state estimation methods published in the last decade,
which use the data provided by connected and automated vehicles (CAVs). 
They classify the different traffic observation methods which use CAVs data by 
distinguishing between deterministic and stochastic approaches. We use here the same outline.

Among the deterministic approaches, we cite~\cite{doi:10.1111/mice.12095} where the authors expose a method to estimate queue lengths with probe vehicles as the single source of information.
Position and instantaneous speed of probe vehicles are the input data of their method.
The latter relaxes some common assumptions made in the literature,
such as the knowledge of signal timings or arrival process distribution.
The shockwave theory based on first order traffic models is the key model used in their queue length estimation method.
In~\cite{JEFFBAN20111133}, Ban et al. use intersection travel times in order to estimate queue lengths and delays at junctions.
These intersection travel times are measured when probe vehicles cross virtual trip lines (VTL), located upstream and downstream relatively to the intersection.
The benefits of using intersection travel times are~: respect the privacy of the users, the flexibility in defining the virtual trip lines (as they are virtual locations), 
and the pliancy which enables other sensor inputs such as Bluetooth Mac address matching, and other travel times collection systems.

Among the stochastic approaches, the authors of~\cite{HAO2013513} define a vehicle index as ``the position of vehicles in the departure process of the cycle''.
That paper has proposed a method for estimating these vehicle indices which are described as a basic and primordial information that can be provided by
probe vehicles. For example, knowing the index of a vehicle gives its position in the queue.
With the intent of solving some privacy issues, their method relies only on intersection travel times as input data.
They derive the intersection travel times from the arrival time and departure time of probe vehicles into virtual areas (Virtual Trip Lines, VTL) respectively 
upstream and downstream the intersection.
They model the arrival process as a time-dependent Poisson process; and use a log-normal distribution to model the departure headways for every vehicle index.
In this framework, the authors use a Bayesian Network in order to estimate vehicle indices.

Vehicle indices are some basic information that can be used as input data for estimating queue length at junctions.
Indeed in 2014, Hao et al.~\cite{HAO2014185} have naturally pursued their work on vehicle indices with the estimation of queue lengths at intersections.
Using as input data the intersection travel times and the vehicle indices, as determined in their former 2013 paper,
they estimate queue lengths using a stochastic model based on Bayes theorem.

In 2009, Comert and Cetin~\cite{COMERT2009196} have proposed a method for the estimation of queue length using the data provided by probe vehicles.
They have assumed that probe vehicles indices are available as input data.
Assuming that the probability distribution of the queue length is given, they compute a conditional probability distribution of the queue length, knowing 
the locations of the probe vehicles in the queue. They show that for the 1-lane case, the location of the last probe vehicle in the queue is the only one needed.
The authors of~\cite{COMERT2009196} have also derived the variance of the estimator. 
Furthermore, numerical analysis are performed, where the arrival processes models and arrival processes intensities are varied.
This work ``appears to be the first attempt to formulate the problem of estimating the queue length from probe
vehicle data.''

In 2013, Comert~\cite{COMERT201359} derives queue length, last probe location and queue joining times probability distributions, with or without overflow queue (residual queue at the end of the red time).
Mean and variance for the queue length estimators are given.
It has been shown that the estimators depend on probe proportion, red duration and arrival process properties.

In 2016, Comert~\cite{COMERT2016502} goes a step further by studying the cases with unknown probe proportions and unknown arrival rates.
He gives analytical formulations for these primary parameters (proportions of probe vehicles and arrival rate), as well as various
queue length estimators with or without overflow queue.
Derivation of the estimation errors are also given, and numerical analysis performed with VISSIM microscopic simulator have been presented.

In 2017, Zheng and Liu~\cite{ZHENG2017347} estimated traffic volumes for low penetration ratio of equipped vehicles.
The method proposed uses as input data ``vehicle trajectories approaching to an intersection as well as traffic signal status''.
The trajectories of equipped vehicles are used to detect if a probe vehicle has stopped at the traffic light and its stopping position.
With these information,  the arrival rate is estimated and bounds for this arrival rate are given.
Zheng and Liu have used for their estimation a time dependent Poisson arrival process and the Expectation Maximization (EM) algorithm.
They have tested their method with data sets from an experiment where around 2800 probe vehicles were deployed in the city of Ann Arbor, and
from data provided by commercial navigation service in China.
Machine learning methods have also been used in order to predict traffic flow;
for example, Lv et al.~\cite{2015LvBigData} used a data driven machine learning method :
the stacked autoencoders model is combined with
the data provided by 15000 individual detectors deployed across California in order to predict traffic conditions.
In 2021, Zhao et al.~\cite{zhao_maximum_2021} have proposed a method based on maximum likelihood algorithm in order to estimate the penetration ratio of probe vehicles as well as the distribution of queue lengths.
The expectation-maximization algorithm is used to solve the problem formulated as maximum likelihood. The accuracy of the estimation results shows the relevance of their method.
We also refer to the work by Tang et al. published in 2021~\cite{tang_lane-based_2021} in which the authors propose a method for the estimation of queue lengths using license plate recognition detectors.
The authors introduce in their paper a lane based estimation of queue lengths which gives promising simulation and empirical results.
Although the work by Tang et al.~\cite{tang_lane-based_2021} is not using probe vehicles location as input data,
it is addressing the multi lanes case combined with shockwave or input output methods.
\subsection{Assumptions and main contributions}
\label{sec:contrib}
For the roads composed of many lanes, it is not accurate to use the model proposed by Comert~\cite{COMERT2009196}~\cite{COMERT201359}~\cite{COMERT2016502}.
This is because the shortest lanes would be equated to the longest lanes, which obviously is false, in particular in the case of unbalanced demands on the various lanes.
Controls of the traffic light which are lane based would be inaccurate by not differentiating between the lanes.

In the present paper, we generalize the method presented in~\cite{9011732} which estimates road traffic state at controlled junctions for the two-lanes roads, to the case
where roads are composed of any number of lanes.
These methods are based on existing works published in~\cite{COMERT2009196}~\cite{COMERT201359}~\cite{COMERT2016502} which were addressing the roads composed of one lane.
To our knowledge, the present work and our previous paper~\cite{9011732} are the first works to extend the papers by Comert and Cetin~\cite{COMERT2009196}~\cite{COMERT201359}~\cite{COMERT2016502} to the multi lanes case.
In addition to this generalization, we give methods to estimate the number of probe vehicles per lane as well as a new estimator for the penetration ratio of probe vehicles.
Based on these estimates, we also give a new analytical formulation for estimating the queue length per lane (Proposition 3).

The main assumptions of our work are the following :
\begin{itemize}
 \item a proportion $p$ of vehicles are equipped with wireless communication and localization capabilities, and we name them probe vehicles. The probe vehicles send their GPS localization to a road side unit using wireless communication.
The penetration ratio $p$ of probe vehicles is considered as a variable in the present paper.
 \item the traffic demand is low or moderate such that we can assume that the arrivals are following a Poisson process. Hence it is supposed that there is no overflow queue : the queues are cleared at the beginning of the red time.
 \item it is assumed that in average the queues tend to balance among the different lanes under the constraints of the vehicle assignments onto the lanes, with respect to their destinations.
 \item the GPS localization of vehicles is not accurate enough to distinguish on which lane a probe vehicle is located. This assumption relies on the
 fact that the accuracy of GPS localization is such that
 we can assign a vehicle with 95\% probability within a radius of $7.8 m$~\cite{gps.gov} although the width of a lane is $3.7m$ in the USA.
\end{itemize}

\section {Road traffic state estimation}
\label{sec-traffic-state}
The notations we use in the present paper are described in TABLE~\ref{tab:notations}; see also Fig.~\ref{fig:queue_3_lanes}.
\begin{figure}[htbp]
\begin{center}
\includegraphics[width=0.9\linewidth, keepaspectratio]{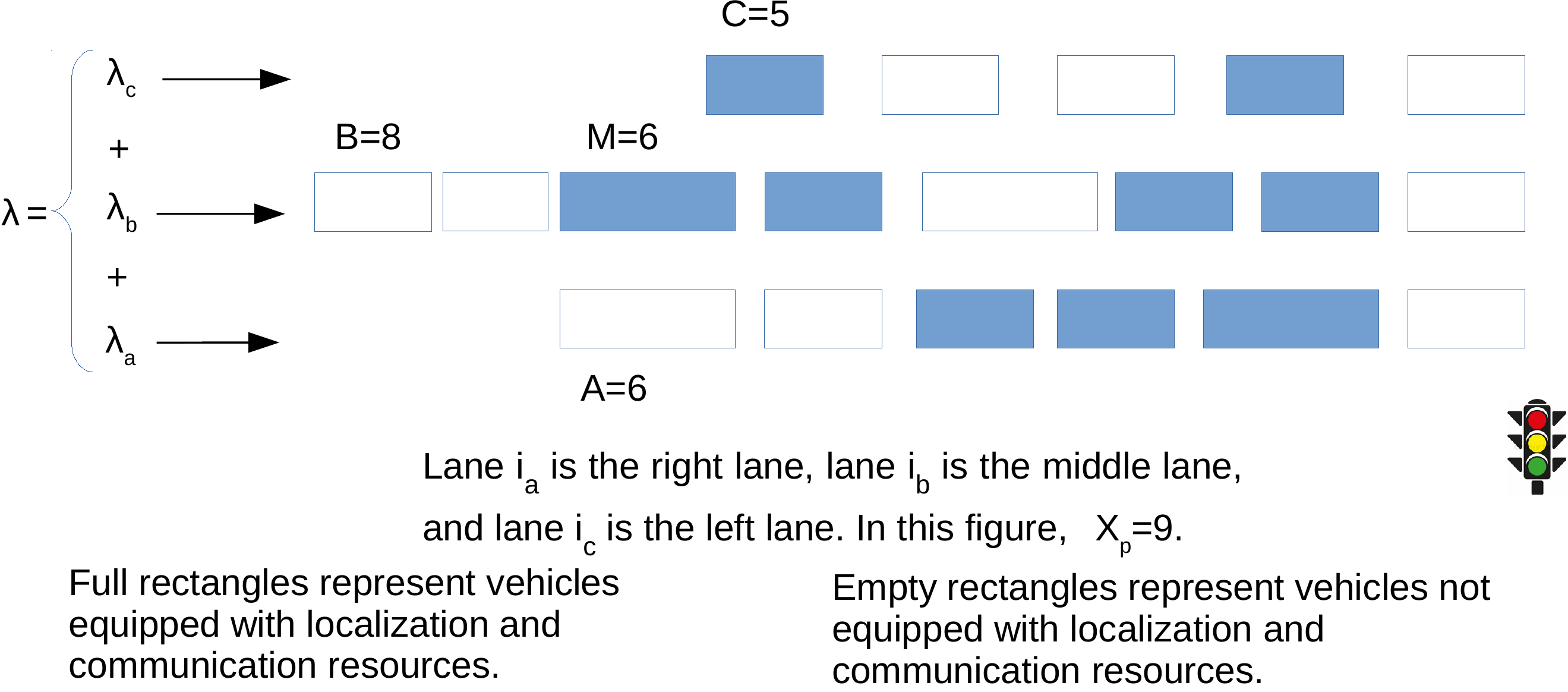}
\caption{Vehicles queue at a controlled junction, in a three-lanes incoming road.}
\label{fig:queue_3_lanes}
\end{center}
\end{figure}
\begin{table}[htbp]
  \begin{tabular}{|l|p{4.5cm}|}
    \hline
    Name & Definition \\
    \hline
    $R$ & the total red time in one cycle \\
    \hline
    $r_a(t)$, $r_b(t)$, $r_c(t)$  & the time since the beginning of the red phase for respectively lane $i_a$, $i_b$, or $i_c$ (it is $0$ if we are not in red phase at time $t$), $0\leq r_a(t), r_b(t), r_c(t)\leq R$. \\
    \hline
    $\lambda_a, \lambda_b$, $\lambda_c$ & the average arrival rate in vehicles/second respectively on lane $i_a$, $i_b$ and $i_c$.\\
    \hline
    $\lambda=\lambda_a+\lambda_b+\lambda_c$ & the total arrival rate for the incoming link in vehicles/second.\\
    \hline
    $X(t)$ & the number of vehicles queuing on all the lanes, of the considered link at time~$t$. \\
    \hline    
    $X_{p}(t)$ & the number of communicating vehicles queuing on all the lanes of the considered link at time~$t$. \\
    \hline    
    $Y(t)$ & the number of vehicles (queuing and not queuing) of the considered link at time $t$. \\
    \hline    
    $Y_{p}(t)$ & the number of communicating vehicles (queuing and not queuing) on all the lanes of the considered link at time $t$. \\    
    \hline    
    $p$, $0 \leq p \leq 1$ & the penetration ratio of probe vehicles. \\
    \hline    
    $A$, $B$, $C$ & the number of vehicles in the queue at time $t$ and respectively lane $i_a$, $i_b$, or $i_c$. In this paper, $A$, $B$, $C$ are assumed to be random variables. \\
    \hline    
    ($A_p$, $\hat{a_p}$), ($B_p$, $\hat{b_p}$), ($C_p$, $\hat{c_p}$) & the number of probe vehicles in the queue, and its estimate, at time $t$ and respectively at lane $i_a$, $i_b$, or $i_c$. In this paper, $A_p$, $B_p$, $C_p$ are assumed to be random variables. \\
    \hline    
    $M$ & the location (in number of vehicles) of the last probe in the queue, namely the last connected vehicle, at time $t$. $M$ is assumed to be a random variable, taking value $m$.\\
    \hline    
    $L$ & the incoming lane $i_a$, $i_b$, or $i_c$, of the last connected vehicle, at time $t$. $L$ is assumed to be a random variable.\\
    \hline    
    $O_k$ & $O_k=i$ if a vehicle $k$ is located on an incoming lane $i={i_a,i_b,i_c}$. $O_k$ is considered a random variable.\\
    \hline
    $D_k$ & $D_k=j$ if a vehicle $k$ is located on an outgoing road $j={1,2,..,d}$. $D_k$ is considered a random variable.\\
    \hline    
    $W_{ij}$ & $W_{ij}$ is a random variable. $W_{ij}=1$ if a vehicle $k$ comes from the origin lane $i$ and goes to a destination road $j$, such that $O_k=i$ and $D_k=j$; $W_{ij}=0$ otherwise. We have the following notation : $P(W_{ij}=1)=w_{ij}$.\\
    \hline    
    $W=(w_{ij})_{i=i_a, i_b, i_c;j=1..d}$ & W is the matrix with three lines and $d$ columns which represents the vehicles assignment from the incoming lanes to the outgoing roads.\\
    \hline
    $q_{sat}$ & the saturation rate (exit rate) of roads outgoing from the junction \\
    \hline    
    $t_e^{k,j}$ & the exit time of the junction of a probe vehicle $k$ on an outgoing road $j$\\
    \hline    
    $\pi(k,\mu_i)=\mu_i^k e^{-\mu_i}/k!$ & the Poisson probability mass function of parameter $\mu_i$\\
    \hline
  \end{tabular}
  \caption{Notations}
  \label{tab:notations}
\end{table}
We consider a road junction composed of a number of incoming roads, one of them having three lanes, and $d$ outgoing roads of any number of lanes, controlled by a traffic light.
The vehicles come from an origin lane $i\in O=\{i;i=i_a,i_b,i_c\}$ (where $i_a$, $i_b$, $i_c$ denote the three origin lanes)
of the 3-lanes incoming road; 
and go to a destination road $j\in D:=\{j;j=1..d\}$.
We label the incoming vehicles by index $k\in \mathbb{N}^*$ without taking into account their arrival order.
Then, we introduce the two families $O_k$ and $D_k, k\in\mathbb{N}^*$  of random variables.
$O_k$ takes its values in $O$, such that $O_k=i$ if vehicle $k$ comes from origin lane $i$.
$D_k$ takes its values in $D$, such that $D_k=j$ if vehicle $k$ goes to destination road $j$.
We assume that the probability that a vehicle $k$ comes from incoming lane $i\in O$ and 
goes to outgoing road $j$, is the same for all the vehicles $k\in \mathbb{N}^*$.
Hence, we introduce the family of boolean random variables $W_{ij}, i\in O, j\in D$ such that
$W_{ij}=1$ if a vehicle $k$ comes from the origin lane $O_k=i$ and goes to destination road $D_k=j$; and $0$ otherwise.
We will denote $w_{ij}:=P(W_{ij}=1)$,
and $W:=(w_{ij})_{i\in O;j\in D}$ the matrix with three lines and $d$ columns which represents the vehicles assignment from the incoming lanes to the outgoing roads,
and the assignment weight given to each couple (incoming lane, destination road).
\subsection{Primary parameters}
\label{sub-primary}
In this section, we determine the penetration ratio $p$ of equipped vehicles, the total arrival rate $\lambda$ of vehicles, the matrix $W$,
and the arrival rates $\lambda_i$ per incoming lane.

\subsubsection{Estimation model for the penetration ratio p}
We will now estimate the penetration ratio $p$ of probe vehicles.
In \cite{9011732}, we have introduced the ratio $\kappa=\min{\mu_i}/\max{\mu_i}$ and demonstrated that for the two lanes case and under the conditions of taking respectively the maximum and minimum queue lengths measurements as :
$l_p+1/p-1$, $\kappa(l_p+1/p-1)$, our estimator $\hat{p}=\frac{X_p/(1+\kappa)-1}{m-1}$ is unbiased. 
However, these two queue lengths estimations are not given usually as measurements and
we can not take $\kappa$ as an average value because it will introduce some bias.
This is why we propose another estimator for the penetration ratio $p$ of probe vehicles which can be applied more easily in real life situations.
The number of probe vehicles in an incoming road $n$ is denoted by $X_p(n)$.
Hence, the total number of probe vehicles in the queues of all incoming roads is $\sum_n X_p(n)$ where the index $n$ of the sum represents an incoming road.
We denote $t_e^{k,j}$ the exit time of a probe vehicle $k$ on an outgoing road $j$ during the green time. 
Considering one outgoing road $j$, we can say that the last probe vehicle going out to this road is located at $q_{sat}\times\max_k\{t_e^{k,j}\}$ position (in number of vehicles)
where $q_{sat}$ is the saturation rate (exit rate) of vehicles per unit of time.
If we consider the vehicles waiting in the queue, we can virtually rearrange their order such that they are placed in the order they are going out of the junction.
We notice that changing the order of vehicles does not change their total number.
We write the following formula for the estimation of $p$ where the index of the sum $n$ represents an incoming road and the index of the sum $j$ represents an outgoing road.
\begin{equation}
\label{eq:p}
 \hat{p}=\frac{\sum_n X_p(n)}{q_{sat}\sum_j \max_k\{t_e^{k,j}\}}
\end{equation}
We notice that the numerator of equation~(\ref{eq:p}) is computed during the red time and the denominator during the green time.
\subsubsection{Estimation model for the total arrival rate $\lambda$}
We also propose to estimate $\lambda$ as follows :
\begin{equation}
 \hat{\lambda}=\frac{Y_p(R)-Y_p(0)}{pR}
\end{equation}
$Y_p$ represents the number of probe vehicles on the considered incoming road and $R$ is the total red time.
\subsubsection{Estimation model for the matrix W}
The matrix $W$ is used to calibrate the assignment model, by equilibrating the queue lengths among the three lanes $i_a$, $i_b$, $i_c$.
The first step is to determine the matrix $W$.
The probability that a vehicle $k$ is coming from lane $i$ is equal to $w_i=\sum_j w_{ij}$.
We have $0\leq w_i\leq1$ and $\sum_i w_i=1$.
The objective is to find the optimal assignment matrix $W$ which equilibrates the ratios of inflows over the three lanes of the incoming road.
The ideal case is :
\begin{equation}
 \forall i, \sum_j w_{ij}=1/3
\end{equation}
Therefore, in order to equilibrate the inflows over the three incoming lanes we propose to minimize the difference $(\sum_j w_{ij}-1/3)$ for all $i=1,2,3$.
We denote $\bar{W}:=(w_{11},..., w_{1d}, w_{21},...,w_{2d},w_{31},...,w_{3d})$
and $v=(1/3, 1/3, 1/3)$.
We denote $\bar{\lambda_j}$ the proportion of the arrival rate $\lambda$ of vehicles going to the outgoing road $j$.
The $\tilde{w}_{ij}=0$ represent the information given by the topology of the road junction, 
specifically the incoming lanes and outgoing roads which are not connected.
We have the following constraints :
\begin{align}
\sum_{ij}{w_{ij}}=1  \label{cont1} \\
\sum_i w_{ij}  = \rho_j := \bar{\lambda_j}/\lambda,  \forall j \label{cont2} \\
\tilde{w}_{ij} = 0 \label{cont3} \\
0\leq w_{ij}\leq 1 \label{cont4} 
\end{align}
We note that, since the turn ratios $\rho_j$ are fixed such that $\sum_j \rho_j = 1$, then the constraint~(\ref{cont1}) is automatically satisfied.
The constraints~(\ref{cont2}) and~(\ref{cont3}) are linear. We can write them as follows~:
\begin{equation}
\label{eq:constraints2}
 \bar{A}\bar{W}=\bar{b}
\end{equation}
where
$$\bar{A} = \begin{pmatrix}                        
		I_d & I_d & I_d \\
		    & B   &
	    \end{pmatrix},$$
and $\bar{b}=(\rho_1,..., \rho_d,0, ..., 0)$, with $I_d$ the $d\times d$ identity matrix, and $B$ the matrix satisfying $B \bar{W} = \tilde{w}$.

For the criterion, we define the $3\times(3d)$ matrix $L$ such that~:

\begin{equation}
 \left(w_1, w_2, w_3\right)=\left( \sum_j w_{1j}, \sum_j w_{2j},\sum_j w_{3j} \right)= L \bar{W}.
\end{equation}
We write the following objective function :
\begin{align}
\label{eq:w}
  \min_{\bar{W}} (L\bar{W}-v)'(L\bar{W}-v)
\end{align}
This can be written as :
\begin{align}
\label{eq:w1}
   \min_{\bar{W}} \bar{W}'Q\bar{W} - 2(v'L)\bar{W}
\end{align}
where $Q=L'L$.
Finally, the minimization problem can be written as :
\begin{align}
\label{eq:w2}
  \min_{\bar{W}} \bar{W}'Q\bar{W} - 2(v'L)\bar{W} \nonumber\\ 
\bar{A}\bar{W}=\bar{b}\\
0\leq w_{ij}\leq 1 \nonumber
\end{align}
The problem~(\ref{eq:w2}) is convex since the criterion is quadratic and the constraints are linear.
Therefore, first order conditions of optimization are necessary and sufficient to solve this problem.
Practically, we use an off-the-shelf optimization library provided with Octave software~\cite{octave} in order to solve this problem 
and derive $\bar{W}$ (and then the matrix $W$), given the turn ratios as input.
\subsubsection{Estimation of the arrival rates per lane $\lambda_i$}
The arrival rates $\lambda_i$ per incoming lane $i$ in one cycle are given as follows~:
\begin{align}
 \lambda_{i} &=  \lambda \sum_j w_{ij} = \lambda w_i
 \label{eq:lambda_i}
\end{align}
$(\lambda_i)_{i=i_a, i_b, i_c}$ represents the arrival rate of vehicles in the queue, respectively on lane $i=i_a, i_b, i_c$.
This will enable us to determine the probability distribution of queue lengths per lane in Proposition~\ref{prop1}.

\subsection{Traffic state estimation for roads composed of three lanes}
\label{sub-three-lanes}
In this section, we propose three probability distributions for the queue lengths.
We will use the expectation of these probability distributions as estimators.
Proposition~\ref{prop1} gives the probability distribution without the information given by the probe vehicles but using the matrix W, especially the arrival rates per lane $\lambda_i$.
Proposition~\ref{prop2} will refine Proposition~\ref{prop1} by adding the information given by the probe vehicles. It is an extension of a previously published result~\cite{9011732}.
In Proposition~\ref{prop3}, we use the estimation of probe vehicles per lane to give another estimator.

We denote $\mu_i$ the stock of vehicles waiting in the queue of lane $i$ at time $t$ :
\begin{equation}
 \mu_i=\lambda_i r_i(t)\nonumber
\end{equation}
We denote $\pi(k,\mu_i)$ the Poisson probability mass function of parameter $\mu_i$.
$$\pi(k,\mu_i)=\frac{\mu_i^k e^{-\mu_i}}{k!}, \forall i$$
Under the assumption of an arrival process of vehicles represented by a Poisson process of rate 
$\lambda=\lambda_a+\lambda_b+\lambda_c$, we have the following proposition :
\begin{prop}
  \label{prop1}
  \begin{multline}
    P(A=a,B=b,C=c)=\pi(a,\mu_a)\pi(b,\mu_b)\pi(c,\mu_c) \nonumber\\    
  \end{multline}
\end{prop}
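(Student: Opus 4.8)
The plan is to prove Proposition~\ref{prop1} by invoking the splitting (thinning) property of Poisson processes, which turns the single aggregate arrival stream of rate $\lambda$ into three \emph{independent} per-lane Poisson streams. First I would isolate the two structural facts that drive the argument. The arrivals on the incoming road form a Poisson process of rate $\lambda$ (by the standing Poisson assumption), and each arriving vehicle $k$ is routed to lane $i\in\{i_a,i_b,i_c\}$ independently of the others, with probability $w_a,w_b,w_c$ respectively, where $w_a+w_b+w_c=\sum_i w_i=1$. This cross-vehicle independence of the routing is exactly the content of the assumption that $P(W_{ij}=1)=w_{ij}$ is identical for every $k$ and is drawn independently of the arrival instants.

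Next I would apply the splitting theorem: colouring each point of a rate-$\lambda$ Poisson process independently with colour $i$ of probability $w_i$ yields three independent Poisson processes of rates $\lambda w_a,\lambda w_b,\lambda w_c$, which by~\eqref{eq:lambda_i} are precisely $\lambda_a,\lambda_b,\lambda_c$. The no-overflow assumption then lets me identify the queue content on lane $i$ at time $t$ with the number of arrivals on that lane since the start of its red phase, i.e.\ over a window of length $r_i(t)$. For a Poisson process of rate $\lambda_i$ the count over such a window is Poisson with mean $\mu_i=\lambda_i r_i(t)$, so marginally $A\sim\text{Poisson}(\mu_a)$, $B\sim\text{Poisson}(\mu_b)$, $C\sim\text{Poisson}(\mu_c)$, each with mass function $\pi(\cdot,\mu_i)$.

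Because the three coloured processes are mutually independent, I would then conclude that the counts $A$, $B$, $C$ over their respective red windows are independent random variables, so the joint law factorises as $P(A=a,B=b,C=c)=P(A=a)\,P(B=b)\,P(C=c)=\pi(a,\mu_a)\pi(b,\mu_b)\pi(c,\mu_c)$, which is the claim.

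The step I expect to be the main obstacle is justifying the \emph{independence} of $A$, $B$, $C$ rather than their marginal Poisson form, which is immediate. Independence rests entirely on lane assignment being i.i.d.\ across vehicles and independent of arrival epochs; were the routing of a vehicle correlated with its arrival time or with earlier routings, the splitting theorem would fail and the joint law would not decouple. A self-contained alternative that avoids citing the theorem assumes a common red window $r=r_a=r_b=r_c$, conditions on the total count $N\sim\text{Poisson}(\lambda r)$ over that window, and uses that the lane labels are multinomial with parameters $(N;w_a,w_b,w_c)$. Since fixing $A=a$, $B=b$, $C=c$ forces $N=a+b+c$, the single surviving term is $\frac{(\lambda r)^{a+b+c}e^{-\lambda r}}{(a+b+c)!}\cdot\frac{(a+b+c)!}{a!\,b!\,c!}\,w_a^{a}w_b^{b}w_c^{c}$, and with $\mu_i=\lambda r\,w_i$ and $\mu_a+\mu_b+\mu_c=\lambda r$ this collapses to $\pi(a,\mu_a)\pi(b,\mu_b)\pi(c,\mu_c)$. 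I would keep the splitting argument as the main line and present this computation only as a confirming remark.
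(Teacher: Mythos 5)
Your proposal is correct and follows essentially the same route as the paper's proof: both rest on the splitting (thinning) property of the Poisson process, which decomposes the aggregate rate-$\lambda$ arrival stream into three \emph{independent} per-lane Poisson processes of rates $\lambda_a$, $\lambda_b$, $\lambda_c$ (the paper splits with probabilities $\lambda_i/\lambda$, you with $w_i$, which coincide by~(\ref{eq:lambda_i})), after which the counts over the red windows $r_i(t)$ are independent Poisson random variables with means $\mu_i=\lambda_i r_i(t)$ and the joint law factorises. Your write-up is somewhat more explicit than the paper's --- which simply cites the splitting theorem from a reference and states the means --- notably in justifying the independence step and in the optional multinomial-conditioning check, but the underlying argument is identical.
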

\begin{proof}
The Poisson process of rate $\lambda$ is subdivided into three independent Poisson processes of rate $\lambda_a$, $\lambda_b$, $\lambda_c$ 
with probability respectively $\lambda_a/\lambda$, $\lambda_b/\lambda$, $\lambda_c/\lambda$.
By subdividing the main Poisson process of parameter $\lambda$ with these probabilities, we get
three Poisson processes of parameter $\lambda(\lambda_i/\lambda)=\lambda_i$ for
each lane $i$. Furthermore, the three Poisson processes are independent.
Concerning the subdividing of a Poisson process, we cite reference~\cite{gallager2013stochastic}.
Because of the stationary increment property of a Poisson process,
the expected number of vehicles queuing on lane $i$ at time $r_i(t)$ is $\lambda_i r_i(t)=\mu_i$.
\end{proof}
Knowing the location of the last probe vehicle into the queue and the total number of probe vehicles in the queue,
we can refine the estimation of the queue lengths :
\vspace{0.10cm}
\begin{prop}
\label{prop2}
\hspace{2em}
\\
If $m \leq\max(a,b,c)$ and $x_p \leq a+b+c$, then :
      \begin{multline}
	P(A=a,B=b,C=c|M=m,X_p=x_p)=\\
	\frac{\sigma_{a,b,c}(1-p)^{a+b+c}P(A=a,B=b,C=c)}
	{\sum \limits_{\substack{j,k,l\geq0\\\text{subject to}\\m \leq \max(j,k,l)\\x_p\leq j+k+l}}^{}
	\sigma_{j,k,l}(1-p)^{j+k+l} P(A=j,B=k,C=l)}. \nonumber    
      \end{multline}
    Otherwise, \\ $P(A=a,B=b,C=c|M=m,X_p=x_p)=0$.
\\
where we define :
$$\sigma_{j,k,l}:=\binom{m-1+\min(m,j,k,l)+\tmid(j,k,l)}{x_p-1}$$
and $\tmid(j,k,l):=(j+k+l)-\min(j,k,l)-\max(j,k,l)$
\end{prop}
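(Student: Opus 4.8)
The plan is to read the statement as a Bayesian posterior and reduce it, via the prior supplied by Proposition~\ref{prop1}, to one combinatorial count. Writing Bayes' rule for the queue lengths given the two observations,
\[
P(A{=}a,B{=}b,C{=}c\mid M{=}m,X_p{=}x_p)=\frac{P(M{=}m,X_p{=}x_p\mid A{=}a,B{=}b,C{=}c)\,P(A{=}a,B{=}b,C{=}c)}{\sum_{j,k,l}P(M{=}m,X_p{=}x_p\mid A{=}j,B{=}k,C{=}l)\,P(A{=}j,B{=}k,C{=}l)},
\]
the entire statement follows once I establish that the likelihood factorises as
\[
P(M{=}m,X_p{=}x_p\mid A{=}a,B{=}b,C{=}c)=\binom{(m-1)+\min(m,a,b,c)+\tmid(a,b,c)}{x_p-1}\,p^{x_p}(1-p)^{a+b+c-x_p},
\]
because then the $(a,b,c)$-free factor $p^{x_p}(1-p)^{-x_p}$ cancels between numerator and denominator, leaving precisely $\sigma_{a,b,c}(1-p)^{a+b+c}P(A{=}a,B{=}b,C{=}c)$ divided by the displayed normalising sum, with $P(A,B,C)$ the product of Poisson masses from Proposition~\ref{prop1}.

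To reach this likelihood I would condition on the number of probes. Each of the $a+b+c$ queued vehicles is independently a probe with probability $p$, so every probe placement having exactly $x_p$ probes carries the same weight $p^{x_p}(1-p)^{a+b+c-x_p}$; hence the likelihood equals this weight times the number $N$ of placements compatible with both $\{X_p=x_p\}$ and $\{M=m\}$. Since $p^{x_p}(1-p)^{a+b+c-x_p}=p^{x_p}(1-p)^{-x_p}(1-p)^{a+b+c}$ already supplies the required $(1-p)^{a+b+c}$ and an $(a,b,c)$-independent constant, the whole problem collapses to the purely combinatorial identity $N=\sigma_{a,b,c}=\binom{(m-1)+\min(m,a,b,c)+\tmid(a,b,c)}{x_p-1}$.

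The core of the argument, and the step I expect to be the main obstacle, is this count. I would first place the queued vehicles in the virtual departure order introduced before~(\ref{eq:p}), so that ``the last probe is in position $m$'' means one distinguished slot is pinned to be a probe, every slot lying after it is forced to be non-probe, and the remaining $x_p-1$ probes may be chosen freely among the \emph{eligible} slots lying before it. The delicate part is to count these eligible slots and show their number is exactly $(m-1)+\min(m,a,b,c)+\tmid(a,b,c)$: the $m-1$ slots ahead of the last probe on the lane that carries it, the $\tmid(a,b,c)$ slots of the lane cleared entirely before it, and the $\min(m,a,b,c)$ slots of the remaining lane truncated at depth $m$; dually, the forbidden deep slots number $(\max(a,b,c)-m)+\max(0,\min(a,b,c)-m)$ on the two extreme lanes. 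Making these truncations and the off-by-one between $m$ and the $x_p-1$ free probes line up symmetrically in $a,b,c$ is what pins down the precise form of $\sigma_{j,k,l}$, and it is the place where the argument must be carried out most carefully.

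Finally I would assemble the pieces and verify the support. Substituting $N=\sigma_{a,b,c}$ into the likelihood and then into Bayes' rule, and cancelling $p^{x_p}(1-p)^{-x_p}$, reproduces the stated ratio. The side conditions are then automatic: if $m>\max(a,b,c)$ no lane is deep enough to host a probe in position $m$, and if $x_p>a+b+c$ there are too few vehicles to host $x_p$ probes, so in either case $N=0$ and the conditional probability vanishes, matching the ``otherwise $=0$'' clause. The very same inequalities $m\le\max(j,k,l)$ and $x_p\le j+k+l$ delimit the index set of the normalising sum, so numerator and denominator are supported on the same region and the expression is a genuine probability distribution, which closes the proof.
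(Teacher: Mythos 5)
Your Bayes set-up is sound, and it does match the general strategy behind the result (the paper itself gives no self-contained argument here: its ``proof'' is a one-line citation of the two-lane proof in~\cite{9011732}, claimed to transfer with minor changes). The factor $p^{x_p}(1-p)^{-x_p}$ is indeed $(a,b,c)$-independent and cancels, the prior is Proposition~\ref{prop1}, and the support conditions are routine. But this means the entire mathematical content of the proposition is the combinatorial identity $N=\sigma_{a,b,c}$, and that is precisely the step you do not carry out: you announce it as ``the main obstacle'' and assert the slot decomposition $(m-1)+\tmid(a,b,c)+\min(m,a,b,c)$ without deriving it. As written, the proof reduces everything to an unproven claim.

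Moreover, that claim is not merely unproven; under the model you yourself set up (each of the $a+b+c$ queued vehicles independently a probe with probability $p$, $M$ the deepest probe position over the three lanes), it is false. The number of placements of $x_p$ probes whose deepest position is exactly $m$ is $N=\binom{\sum_i\min(\ell_i,m)}{x_p}-\binom{\sum_i\min(\ell_i,m-1)}{x_p}$ with $(\ell_1,\ell_2,\ell_3)=(a,b,c)$, and this does not equal $\sigma_{a,b,c}$ up to an $(a,b,c)$-independent constant: for $x_p=1$ one has $\sigma_{j,k,l}=\binom{\cdot}{0}=1$ for every admissible $(j,k,l)$, whereas $N$ equals the number of lanes of length at least $m$, which varies over the terms of the normalizing sum. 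Your decomposition also contains the internal inconsistency that produces this mismatch: the middle lane contributes all $\tmid(a,b,c)$ slots untruncated (so a probe could sit deeper than $m$ there, contradicting $M=m$), while the shortest lane is truncated at depth $m$; and ties (two probes at depth $m$ on different lanes) are never addressed. To close the proof you would need to import the precise conventions of~\cite{9011732} --- how the lane carrying the last probe is modeled (compare the $\lambda_i/\lambda$ weighting in the paper's proof of Proposition~\ref{prop3}) and how ties are excluded --- and show that those conventions yield exactly $\sigma_{j,k,l}$; the uniform-placement reading you adopt cannot.
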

\begin{proof}
 The proof given in~\cite{9011732} for the Proposition~\ref{prop2} can be directly applied to our present case, with minor changes.
\end{proof}

We now propose to compute the estimation of the queues by adding the information of matrix $W$ which summarizes the knowledge of the destinations of the probe vehicles at the road junction.
Hence, we propose to compute for lane $i_a$ $$p_A(a)=P(A=a|M=m,A_p=\hat{a_p})$$ given the estimated number $A_p=\hat{a_p}$ of probe vehicles on lane $i_a$ and the location of the last probe in the queues.
First, we will give estimations of the number of probe vehicles per lane $A_p=\hat{a_p}$ in order to compute $p_A(a)$.
We have :
$$A_p = \sum_k \textbf{1}_{a_k},$$
where $$\textbf{1}_{a_k} := \begin{cases}
                              1 & \text{ if vehicle $k$ comes from lane $i_a$} \\
                              0 & \text{otherwise}
			    \end{cases}$$
Then,
\begin{align}
   \mathbb E(A_p) & = \sum_k \mathbb E (\textbf{1}_{a_k}) \nonumber \\
		  & = \sum_k P(O_k =i_a) \nonumber \\
		  & = \sum_{k,j} P(O_k =i_a | D_k = j) P(D_k = j) \nonumber
\end{align}
where $D_k$ is observed. Then,
$P(D_k = j_k) = 1$ for the vehicle $k$ after it has crossed the junction and has gone through an outgoing road $j_k$
and $P(D_k = j) = 0$ otherwise, for $j\neq j_k$.

We have :
\begin{equation}
\label{eq:probes}
 \mathbb E(A_p)= \sum_{k} P(O_k=i_a|D_k=j_k)
\end{equation}
\begin{equation}
\nonumber
 \mathbb E(B_p)= \sum_{k} P(O_k=i_b|D_k=j_k)
\end{equation}
\begin{equation}
\nonumber
 \mathbb E(C_p)= \sum_{k} P(O_k=i_c|D_k=j_k)
\end{equation}
We propose to estimate the number of probe vehicles on lanes $i_a$, $i_b$, $i_c$ by respectively :
$\hat{a_p}=[\mathbb E(A_p)]$, $\hat{b_p}=[\mathbb E(B_p)]$, $\hat{c_p}=[\mathbb E(C_p)]$, where $[x]$ denotes rounding $x$ to the nearest integer.

Now we propose to compute $p_A(a)$ the queue length probability distribution on lane $i_a$ with the information provided by the estimation of probe vehicles on lane $i_a$ ($\hat{a_p}$) and
the location of the last probe vehicle on all the lanes ($m$).
The following proposition is also true for the other lanes by just inverting the lane $i_a$ with $i_b$ or $i_c$.
\small
\begin{prop}
\label{prop3}
\begin{align}
\nonumber
&\text{if } a_p\geq 1, m\geq1, a\geq a_p \text{, then } p_A(a)= \\\nonumber
&\frac{\left(\lambda_a\binom{m-1}{\hat{a_p}-1}+\lambda_b S_{\mu_b}^{m,\hat{a_p}} \binom{a}{\hat{a_p}}+\lambda_c S_{\mu_c}^{m,\hat{a_p}} \binom{a}{\hat{a_p}}\right)
(1-p)^{a}\pi(a,\mu_a)}
{\sum\limits_{n\geq a_p}\left(\lambda_a\binom{m-1}{\hat{a_p}-1}+\lambda_b S_{\mu_b}^{m,\hat{a_p}} \binom{n}{\hat{a_p}}+\lambda_c S_{\mu_c}^{m,\hat{a_p}} \binom{n}{\hat{a_p}} \right)
(1-p)^{n}\pi(n,\mu_a)}\\\nonumber
&\text{if } a_p=0 \text{, then }p_A(a)=\frac{(1-p)^{a}\pi(a,\mu_a)}{\sum_{n\geq0}(1-p)^{n}\pi(n,\mu_a)}\\\nonumber
&\text{if } a<a_p \text{ or } m<a_p \text{, then }p_A(a)=0\nonumber
\end{align}
where we define :
$$S_{\mu}^{m,\nu}:= \sum_{j\leq k,k\geq \max(m,\nu)} \binom{m-1}{j-1}p^{j}(1-p)^{k-j}\pi(k,\mu)$$
\end{prop}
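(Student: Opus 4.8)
The plan is to compute $p_A(a)=P(A=a\mid M=m,A_p=\hat{a_p})$ by Bayes' rule, writing it as the ratio of the joint probability $P(A=a,M=m,A_p=\hat{a_p})$ to the normalizing sum $\sum_{n\ge \hat{a_p}}P(A=n,M=m,A_p=\hat{a_p})$. Since numerator and denominator share the same structure (only $A=a$ is replaced by $A=n$), it suffices to derive the joint probability as a function of the lane-$a$ queue length, and the denominator then appears automatically as its sum over $n\ge\hat{a_p}$. Throughout I would rely on Proposition~\ref{prop1}: the three lanes carry independent Poisson loads of parameters $\mu_a,\mu_b,\mu_c$, and on thinning each arrival stream by the probe probability $p$ (as in~\cite{gallager2013stochastic}), the probe indicators are independent Bernoulli($p$) conditionally on the per-lane queue lengths.

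First I would decompose the joint event according to $L$, the lane carrying the furthest (last) probe, writing $P(A=a,M=m,A_p=\hat{a_p})=\sum_{i\in\{i_a,i_b,i_c\}}P(A=a,M=m,A_p=\hat{a_p},L=i)$. The weight of the case $L=i$ is proportional to $\lambda_i$: the probe streams superpose into a Poisson process of rate $p\lambda$, so the last probe belongs to lane $i$ with probability $p\lambda_i/(p\lambda)=\lambda_i/\lambda$, and the common factor $1/\lambda$ cancels in the Bayes ratio, leaving the bare weights $\lambda_a,\lambda_b,\lambda_c$ of the statement. This is the step I expect to be the main obstacle, because it requires pinning down the tie-breaking and marginalization conventions inherited from~\cite{9011732}: one must argue carefully why, once $L$ is fixed on a lane other than $a$, lane $a$ enters only through its probe count $\binom{a}{\hat{a_p}}$ with no positional constraint, while the remaining third lane is marginalized freely to unity. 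Getting these asymmetric roles of lanes $a,b,c$ to match the formula exactly is the delicate part; the rest is routine bookkeeping.

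Next I would evaluate the three contributions. For $L=i_a$ the last probe sits on lane $a$ itself: conditional on $A=a$, the probability of having $\hat{a_p}$ probes with the furthest at position $m$ is $\binom{m-1}{\hat{a_p}-1}p^{\hat{a_p}}(1-p)^{a-\hat{a_p}}$, which after multiplying by $\pi(a,\mu_a)$ and the weight $\lambda_a$ gives the first term. For $L=i_b$ the last probe lies on lane $b$: summing the per-lane ``last probe at position $m$'' probability $\binom{m-1}{j-1}p^{j}(1-p)^{k-j}$ over the lane-$b$ queue length $k\ge\max(m,\hat{a_p})$ and probe count $j$, weighted by $\pi(k,\mu_b)$, produces precisely $S_{\mu_b}^{m,\hat{a_p}}$, while lane $a$ now contributes only its unconstrained probe-count factor $\binom{a}{\hat{a_p}}p^{\hat{a_p}}(1-p)^{a-\hat{a_p}}$ times $\pi(a,\mu_a)$, and lane $c$ is marginalized away. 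The case $L=i_c$ is symmetric and yields the third term.

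Finally I would collect the three contributions and simplify. The factor $p^{\hat{a_p}}$, together with the constant $\bigl(p/(1-p)\bigr)^{\hat{a_p}}$ relating $\binom{\cdot}{\hat{a_p}}(1-p)^{a-\hat{a_p}}$ to $\binom{\cdot}{\hat{a_p}}(1-p)^{a}$, is independent of $a$ and hence cancels between numerator and denominator; this is exactly why the statement carries the clean factor $(1-p)^{a}$ rather than $p^{\hat{a_p}}(1-p)^{a-\hat{a_p}}$. Dividing by the normalizing sum over $n\ge\hat{a_p}$ gives the main formula. The degenerate cases then follow by inspection: when $\hat{a_p}=0$ no probe lies on lane $a$, the $L=i_a$ term and the binomial factors vanish, and $p_A(a)$ reduces to the renormalized thinned-Poisson marginal $(1-p)^{a}\pi(a,\mu_a)/\sum_{n\ge0}(1-p)^{n}\pi(n,\mu_a)$; and when $a<\hat{a_p}$ or $m<\hat{a_p}$ an empty binomial coefficient forces $p_A(a)=0$.
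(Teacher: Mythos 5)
Your proposal is correct and follows essentially the same route as the paper's own proof: Bayes' theorem, marginalization over the lane $L$ of the last probe with weights $\lambda_i/\lambda$, the Comert-style counting factor $\binom{m-1}{\hat{a_p}-1}/\binom{a}{\hat{a_p}}$ for the own-lane case, the sums $S_{\mu_b}^{m,\hat{a_p}}$ and $S_{\mu_c}^{m,\hat{a_p}}$ for the other lanes (with the same treatment of lane $a$ as positionally unconstrained and lane $c$ marginalized to unity), and the final cancellation of the $a$-independent factors $p^{\hat{a_p}}(1-p)^{-\hat{a_p}}$ and $1/\lambda$. The only cosmetic differences are that you decompose the joint probability rather than the conditional $P(M=m\mid A_p=\hat{a_p},A=a)$, and that you add an explicit thinning/superposition justification for the $\lambda_i/\lambda$ weights, which the paper simply asserts.
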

\normalsize
\begin{proof}
  By Bayes' theorem, we have :
\begin{equation}
p_A(a)=\frac{P(A=a,M=m,A_p=\hat{a_p})}{P(M=m,A_p=\hat{a_p})}
\end{equation}
The numerator can be written :

\begin{multline}
\label{eq:1}
 P(A=a,M=m,A_p=\hat{a_p})=\\
 P(M=m|A_p=\hat{a_p},A=a)P(A=a,A_p=\hat{a_p})
\end{multline}
Bayes theorem implies that :
\begin{equation}
P(A=a,A_p=\hat{a_p})=P(A_p=\hat{a_p}|A=a)P(A=a)
\end{equation}
So, we can write the second term of equation (\ref{eq:1}) as :
\begin{equation}
P(A=a,A_p=\hat{a_p})=\binom{a}{\hat{a_p}}p^{\hat{a_p}}(1-p)^{a-\hat{a_p}}P(A=a)
\end{equation}
Concerning the first term of the product in equation (\ref{eq:1}), we use the marginal distribution on the random variable $L$, and we can write :
\begin{multline}
 P(M=m|A_p=\hat{a_p},A=a)=\\
 P(M=m,L=i_a|A_p=\hat{a_p},A=a)+\\
 P(M=m,L=i_b|A_p=\hat{a_p},A=a)+\\
 P(M=m,L=i_c|A_p=\hat{a_p},A=a)
  \label{eq:2}
\end{multline}
Let us detail the first term of equation (\ref{eq:2}), in the case where $i=i_a$.
Concerning $P(M=m,L=i_a|A_p=\hat{a_p},A=a)$, we recall the arguments given in \cite{COMERT2009196} which are :
``The sample space for
the experiment is the possible combinations of choosing $\hat{a_p}$ probe
vehicles from $a$ vehicles, which is equal to $\binom{a}{\hat{a_p}}$.
The number of elements in the event space is equal to the number of possible placements of the remaining probe vehicles,
other than the one at position $m$, into the previous slots since $m$ is fixed.'', which is 
$\binom{m-1}{\hat{a_p}-1}$.
As the last probe should be on lane $i_a$ with probability $\lambda_a/\lambda$, we write :
\begin{equation}
 P(M=m,L=i_a|A_p=\hat{a_p},A=a)=\frac{\lambda_a}{\lambda}\frac{\binom{m-1}{\hat{a_p}-1}}{\binom{a}{\hat{a_p}}}
\end{equation}
In addition, we have :
\begin{multline}
  P(M=m,L=i_b|A_p=\hat{a_p},A=a)=\\
\frac{P(M=m,B \geq \max(m,\hat{a_p}),L=i_b|A_p=\hat{a_p},A=a)}{P(B \geq \max(m,\hat{a_p})|M=m,L=i_b,A_p=\hat{a_p},A=a)}
\end{multline}
Since $P(B \geq \max(m,\hat{a_p})|M=m,L=i_b,A_p=\hat{a_p},A=a)=1$, we write :
\begin{multline}
  P(M=m,L=i_b|A_p=\hat{a_p},A=a)=\\
P(M=m,B \geq \max(m,\hat{a_p}),L=i_b|A_p=\hat{a_p},A=a)
\end{multline}
We also have :
\begin{multline}
P(M=m,B \geq \max(m,\hat{a_p}),L=i_b)=\\
\frac{  P(M=m,B \geq \max(m,\hat{a_p}),L=i_b|A_p=\hat{a_p},A=a)}{P(A_p=\hat{a_p},A=a|M=m,B \geq \max(m,\hat{a_p}),L=i_b)}\\
\times P(A_p=\hat{a_p},A=a)
\end{multline}
and because $P(A_p=\hat{a_p},A=a|M=m,B \geq \max(m,\hat{a_p}),L=i_b)=P(A_p=\hat{a_p},A=a)$ we can write~:
\begin{multline}
  P(M=m,L=i_b|A_p=\hat{a_p},A=a)=\\
P(M=m,B \geq \max(m,\hat{a_p}),L=i_b)
\end{multline}
By computing the marginal distribution probability on $B_p$, we have :
\begin{multline}
 P(M=m,B \geq \max(m,\hat{a_p}),L=i_b)= \\
 \sum_{b_p,b\geq \max(m,\hat{a_p})} P(M=m, L=i_b|B_p=\hat{b_p},B=b)\times\\
 P(B_p=\hat{b_p},B=b)
\end{multline}
\begin{multline}
 P(M=m,B \geq max(m,\hat{a_p}),L=i_b)= \\
 \sum\limits_{b_p\leq b,b\geq \max(m,\hat{a_p})} \frac{\lambda_b}{\lambda}\frac{\binom{m-1}{\hat{b_p}-1}}{\binom{b}{\hat{b_p}}}\binom{b}{\hat{b_p}}p^{\hat{b_p}}(1-p)^{b-\hat{b_p}}P(B=b)
\end{multline}
\begin{multline}
 P(M=m,B \geq max(m,\hat{a_p}),L=i_b)=  \\
 \sum_{b_p\leq b,b\geq \max(m,\hat{a_p})} \frac{\lambda_b}{\lambda}\binom{m-1}{\hat{b_p}-1}p^{\hat{b_p}}(1-p)^{b-\hat{b_p}}P(B=b)
\end{multline}
We define the variable  $S_{\mu}^{m,\nu}$ :
$$S_{\mu}^{m,\nu}:= \sum_{j\leq k,k\geq \max(m,\nu)} \binom{m-1}{j-1}p^{j}(1-p)^{k-j}\pi(k,\mu)$$
Then, we can write :
\begin{equation}
 P(M=m,B \geq max(m,\hat{a_p}),L=i_b)= \frac{\lambda_b}{\lambda}S_{\mu_b}^{m,\hat{a_p}}
\end{equation}
Similarly,
\begin{equation}
 P(M=m,C\geq max(m,\hat{a_p}),L=i_c)= \frac{\lambda_c}{\lambda}S_{\mu_c}^{m,\hat{a_p}}
\end{equation}
Finally,
\begin{multline}
 P(M=m|A_p=\hat{a_p},A=a)=\\
 \frac{\lambda_a}{\lambda}\frac{\binom{m-1}{\hat{a_p}-1}}{\binom{a}{\hat{a_p}}}+\frac{\lambda_b S_{\mu_b}^{m,\hat{a_p}}}{\lambda}+\frac{\lambda_c S_{\mu_c}^{m,\hat{a_p}}}{\lambda}
\end{multline}
Hence, the numerator can be written as :
\begin{multline}
\label{eq:1:numerator}
 P(A=a,M=m,A_p=\hat{a_p})=\\
\left(\frac{\lambda_a}{\lambda}\frac{\binom{m-1}{\hat{a_p}-1}}{\binom{a}{\hat{a_p}}}+\frac{\lambda_b S_{\mu_b}^{m,\hat{a_p}}}{\lambda}+\frac{\lambda_c S_{\mu_c}^{m,\hat{a_p}}}{\lambda}\right)\times\\
  \binom{a}{\hat{a_p}}p^{\hat{a_p}}(1-p)^{a-\hat{a_p}}P(A=a)
\end{multline}
Using the marginal distribution and some simplifications, we can write in conclusion that :
\small
\begin{multline}
\nonumber
p_A(a)=\\
\frac{\left(\lambda_a\binom{m-1}{\hat{a_p}-1}+\lambda_b S_{\mu_b}^{m,\hat{a_p}} \binom{a}{\hat{a_p}}+\lambda_c S_{\mu_c}^{m,\hat{a_p}} \binom{a}{\hat{a_p}}\right)
(1-p)^{a}\pi(a,\mu_a)}
{\sum\limits_{n\geq a_p}\left(\lambda_a\binom{m-1}{\hat{a_p}-1}+\lambda_b S_{\mu_b}^{m,\hat{a_p}} \binom{n}{\hat{a_p}}+\lambda_c S_{\mu_c}^{m,\hat{a_p}} \binom{n}{\hat{a_p}} \right)
(1-p)^{n}\pi(n,\mu_a)}
\end{multline}
\normalsize
\end{proof}

\begin{figure}[htbp]
\begin{center}
\includegraphics[width=\linewidth, keepaspectratio]{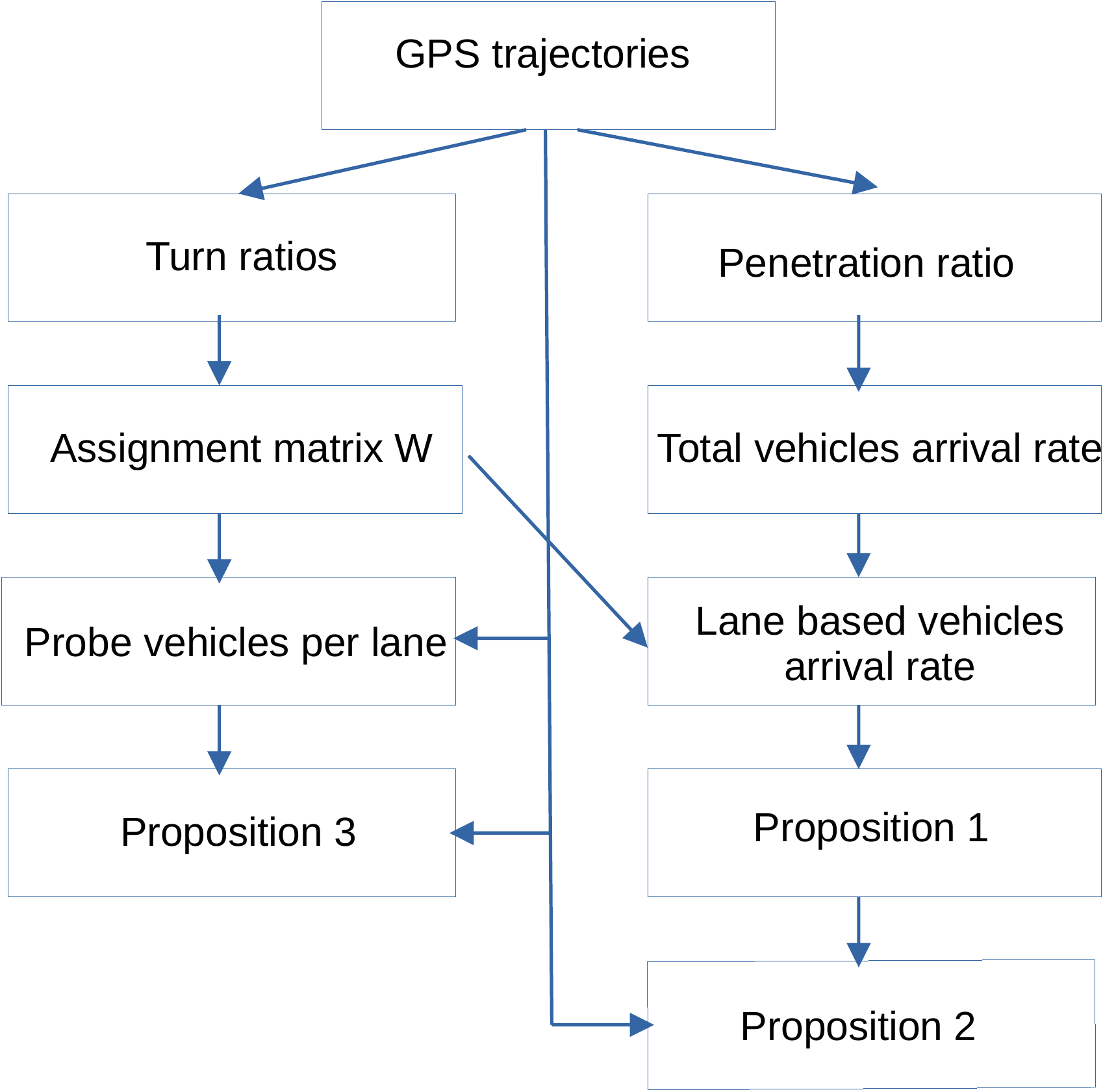}
\caption{Summary of the estimation process}
\label{fig:block_diagram}
\end{center}
\end{figure}
In Fig.\ref{fig:block_diagram}, we have summarized the process of the road traffic estimation.
Based on the trajectories of the probe vehicles, the turn ratios and the penetration ratio of probe vehicles are estimated.
The assignment matrix $W$ is derived as the solution of an optimization problem which tends to equilibrate the queues per lane as much as possible.
The total vehicles arrival rate estimation combined with the assignment matrix $W$ gives the vehicles arrival rate per lane.
On another hand, the number of probe vehicles per lane is estimated with the assignment matrix $W$ and the trajectories of the probe vehicles.
Proposition~\ref{prop1} is derived from the vehicles arrival rate per lane.
Proposition~\ref{prop2} is extending Proposition~\ref{prop1} while using the information given by the probe vehicles.
Finally, Proposition~\ref{prop3} uses the estimated number of probe vehicles per lane combined with the information given by the probe vehicles,
in order to estimate the queue length distribution probability per lane.

\subsection{Application to roads composed of any number of lanes}
\label{sub-n-lanes}
The method we have proposed for the three propositions can be generalized easily to roads composed of an any number of lanes.
However, we give in this section insights into the n-lanes roads case and we demonstrate that the 3-lanes roads case is enough to address the general case.
Indeed, with the accuracy of the GPS localization system nowadays, we can assign a vehicle with 95\% probability within a radius of $7.8 m$, as it is written in~\cite{gps.gov} :
``the government commits to broadcasting the GPS signal in space with a global average user range error (URE) of $\leq7.8 m$ ($25.6$ ft.), with $95\%$ probability.''.

As the standard lane width in the United States is $3.7 m$, we can assign any vehicle located on a lane to a virtual three lanes road.
All the virtual roads composed of successive three lanes are enumerated given the topology of the road junction. 
The virtual roads can overlap such that a lane can be in many virtual roads.
The estimation of a queue length for a lane which is present only in a single virtual 3-lanes road is straightforward with the method we have exposed.
For the other lanes, which are represented in many virtual roads, we will take the average of the estimation done in each of the virtual roads.
By taking the average of the queue length estimations for a given lane, on the set of all the virtual roads, 
we counterbalance the inaccuracy due to the estimation of $A_p=\hat{a_p}$, $B_p=\hat{b_p}$, $C_p=\hat{c_p}$.
Indeed, we recall that we have chosen to take these variables as a function of the expected values : 
$\hat{a_p}=[\mathbb E(A_p)]$, $\hat{b_p}=[\mathbb E(B_p)]$, $\hat{c_p}=[\mathbb E(C_p)]$.

\section{Numerical experiments}
\label{sec:numerical}
\subsection{Estimation of the primary parameters}
\label{subsec:primary_experiments}
In this section we discuss the model with simulation results.
We have implemented the simulation model with Omnet++~\cite{Varga01theomnet++}.
In the Omnet++ implementation, we represent vehicles with packets which arrive as a Poisson arrival process of rate $\lambda$.
The packets (vehicles) are assigned to the queue in accordance with matrix $W$
and their destination, and are extracted from each queue by a traffic light at a given saturation rate during the green time.
In~\cite{9011732}, we have developed an implementation within Veins framework~\cite{sommer2011bidirectionally} which combines a road traffic simulator (SUMO~\cite{SUMO2012}) and a communication simulator (Omnet++). 
We used to combine SUMO microscopic road traffic simulator to represent the dynamics of the vehicles and Omnet++ to simulate the communication between
the vehicles and the infrastructure which is a Road Side Unit (RSU) in our case. 
The difference between the two implementations is the assignment model, which is not the same in SUMO and in our model.
In the Omnet++ implementation, the assignment simulation model corresponds accurately with our theoretical assignment model.
But in the SUMO implementation, the simulation assignment model is the SUMO lane changing model~\cite{10.1007/978-3-319-15024-6_7}.
This is why in the present paper we use a simulation model using Omnet++ exclusively.

We have used a symmetric and an asymmetric scenarios for the simulation.
For both scenarios, there are three outgoing roads.
In both cases, the topology of the road junction is as follows~: if a vehicle turns left, it must be on the left lane; if it turns right, it must be on the right lane;
if it goes straight, it can be on any of the three lanes.
So we have $\tilde{w}_{21}=\tilde{w}_{31}=\tilde{w}_{13}=\tilde{w}_{23}=0$; see section~\ref{sub-primary}.
The matrix W in the two scenarios is the result of the optimization problem given in~(\ref{eq:w2}) with the constraints~(\ref{cont3})
on the variable $\tilde{w}$ given by the junction topology.

\textbf{Scenario S1}
The turn ratios for the symmetric scenario S1 are $(\rho_1, \rho_2, \rho_3)=(0.1,0.8,0.1)$. The total arrival rate $\lambda=0.75$ vehicles/second.
Then the optimization problem~(\ref{eq:w2}) gives the matrix $W$ given in TABLE~\ref{tab:symetric_scenario}.
We observe that this scenario S1 is symmetric, because the incoming lanes are balanced~: $w_{i_a}=w_{i_b}=w_{i_c}=1/3$.
\begin{table}[htbp]
\center
\begin{tabular}{|c|c|c|c|c|}
\hline
Symmetric scenario S1& \multicolumn{3}{c|}{Destination} &\multirow{2}{*}{$w_i$}\\
\cline{1-4}
Incoming lane & 1 & 2 & 3 & \\
\hline
$i_a$ & 0.1 & 0.23 & 0 & 1/3\\
$i_b$ & 0 & 0.33 & 0 & 1/3\\
$i_c$ & 0 & 0.23 & 0.1 & 1/3\\
\hline
$\rho_j$& 0.1 & 0.8 & 0.1&\\
\hline
\end{tabular}
\caption{Matrix $W$ for the symmetric scenario. Total arrival rate=$0.75$ vehicles/second.}
\label{tab:symetric_scenario}
\end{table}

\textbf{Scenario S2}
The turn ratios for the asymmetric scenario S2 are $(\rho_1, \rho_2, \rho_3)=(0.7,0.15,0.15)$. The total arrival rate $\lambda=0.5$ vehicles/second.
Then the optimization problem~(\ref{eq:w2}) gives the matrix $W$ given in TABLE~\ref{tab:asymetric_scenario}.
The scenario S2 is asymmetric since $w_{i_a} \neq w_{i_b}$ and $w_{i_a} \neq w_{i_c}$.

\begin{table}[htbp]
\center
\begin{tabular}{|c|c|c|c|c|}
\hline
Asymmetric scenario S2 & \multicolumn{3}{c|}{Destination} & \multirow{2}{*}{$w_i$}\\
\cline{1-4}
Incoming lane & 1 & 2 & 3 & \\
\hline
$i_a$ & 0.7 & 0 & 0 & 0.7\\
$i_b$  & 0 & 0.15 & 0 & 0.15\\
$i_c$  & 0 & 0 & 0.15 & 0.15\\
\hline
$\rho_j$& 0.7 & 0.15 & 0.15&\\
\hline
\end{tabular}
\caption{Matrix $W$ for the asymmetric scenario. Total arrival rate=$0.5$ vehicles/second.}
\label{tab:asymetric_scenario}
\end{table}
In Fig.~\ref{fig:p} we display the penetration ratio estimator we have proposed in section~\ref{sub-primary}.
The estimator $\hat{p}=\frac{\sum_n X_p(n)}{q_{sat}\sum_j \max_k\{t_e^{k,j}\}}$ generalizes the 
estimators proposed in~\cite{COMERT2016502} to the multi-lanes case.
We observe on Fig.~\ref{fig:p} that for $p>0.15$ the estimation error is very low compared to the case $p<0.15$.

\begin{figure}[htbp]
\begin{center}
\includegraphics[width=0.9\linewidth, keepaspectratio]{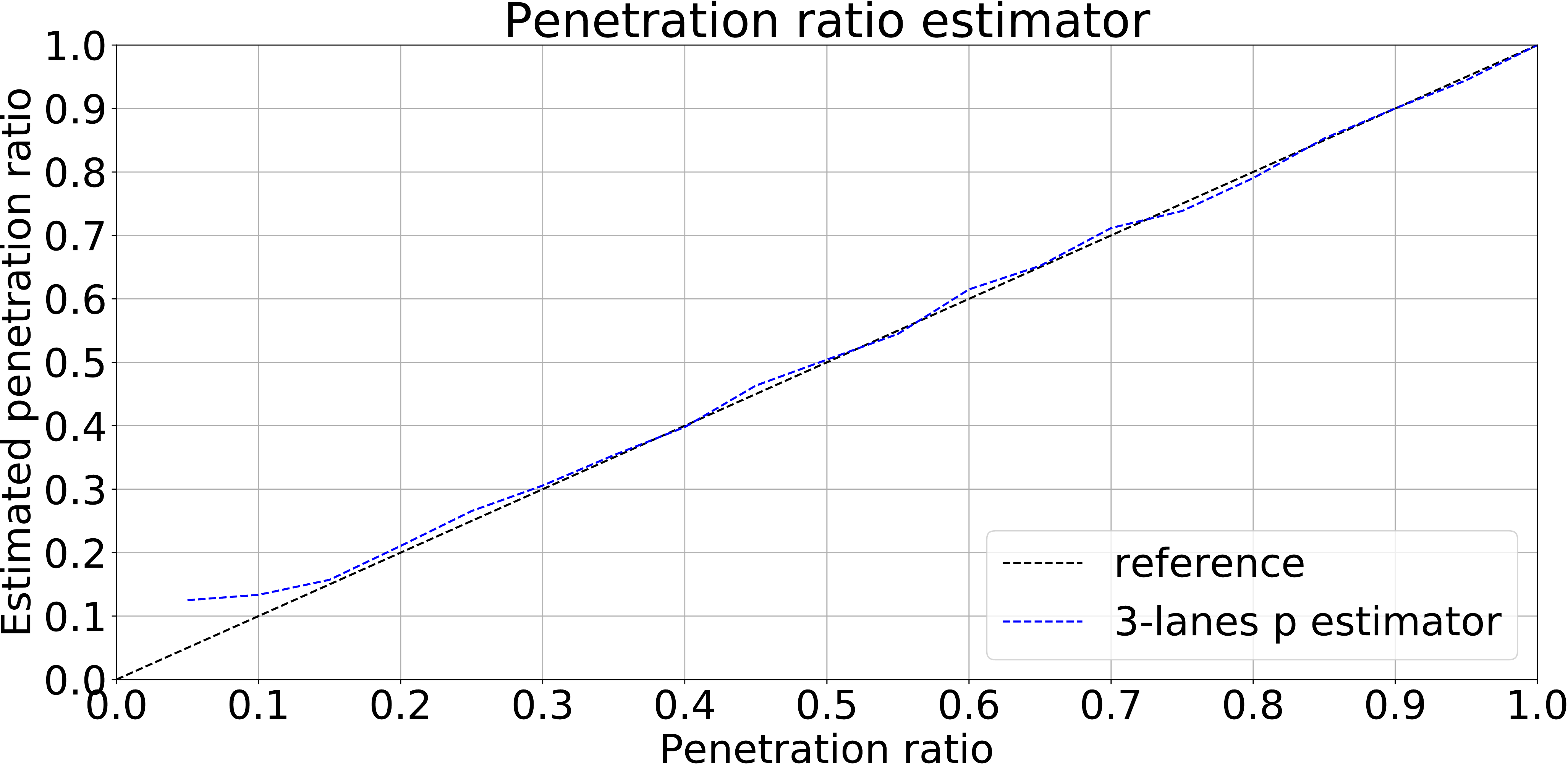}
\caption{Penetration ratio estimator. Simulated time=$25$ hours.}
\label{fig:p}
\end{center}
\end{figure}

In Fig.~\ref{fig:probes}, we show the mean absolute error
of various estimators for computing $A_p$, $B_p$, $C_p$.
The first estimator is the one which estimates the number of probes per incoming lane, based on the destination road of each probe vehicle.
For example, with this estimator we will assume that all the vehicles that have turned right came from the right lane.
We will denote this estimator as $E_0$. With $E_0$, we write :
\begin{equation}
 \hat{a_p}=[\sum_{k}P(D_k=j_1)]
\end{equation}
\begin{equation}
 \hat{b_p}=[\sum_{k}P(D_k=j_2)]
\end{equation}
\begin{equation}
 \hat{c_p}=[\sum_{k}P(D_k=j_3)]
\end{equation}
We propose a more realistic estimator, $E_1$ which is based on Equation~(\ref{eq:probes}) which we recall here :
\begin{equation}
 \hat{a_p}=[\sum_{k} P(O_k=i_a|D_k=j_k)]
\end{equation}
\begin{equation}
 \hat{b_p}=[\sum_{k} P(O_k=i_b|D_k=j_k)]
\end{equation}
\begin{equation}
 \hat{c_p}=[\sum_{k} P(O_k=i_c|D_k=j_k)]
\end{equation}
We notice in Fig.~\ref{fig:probes} that the estimator $E_1$ performs better than the estimator $E_0$.
The estimation error for $E_1$ is at maximum around 1 vehicle, although it can be around 3 vehicles for $E_0$.
This is because with the estimator $E_1$, we use the matrix $W$ to enhance the estimator $E_0$ which estimates
probe vehicles per lane based on the counting of probe vehicles per destination.
We notice also that estimation error for $E_0$ increases linearly with the penetration ratio $p$ while
the estimation error for $E_1$ is clearly sub linear.
\begin{figure}
\centering
 \begin{subfigure}[b]{\linewidth}
  \centering
  \includegraphics[width=\textwidth]{./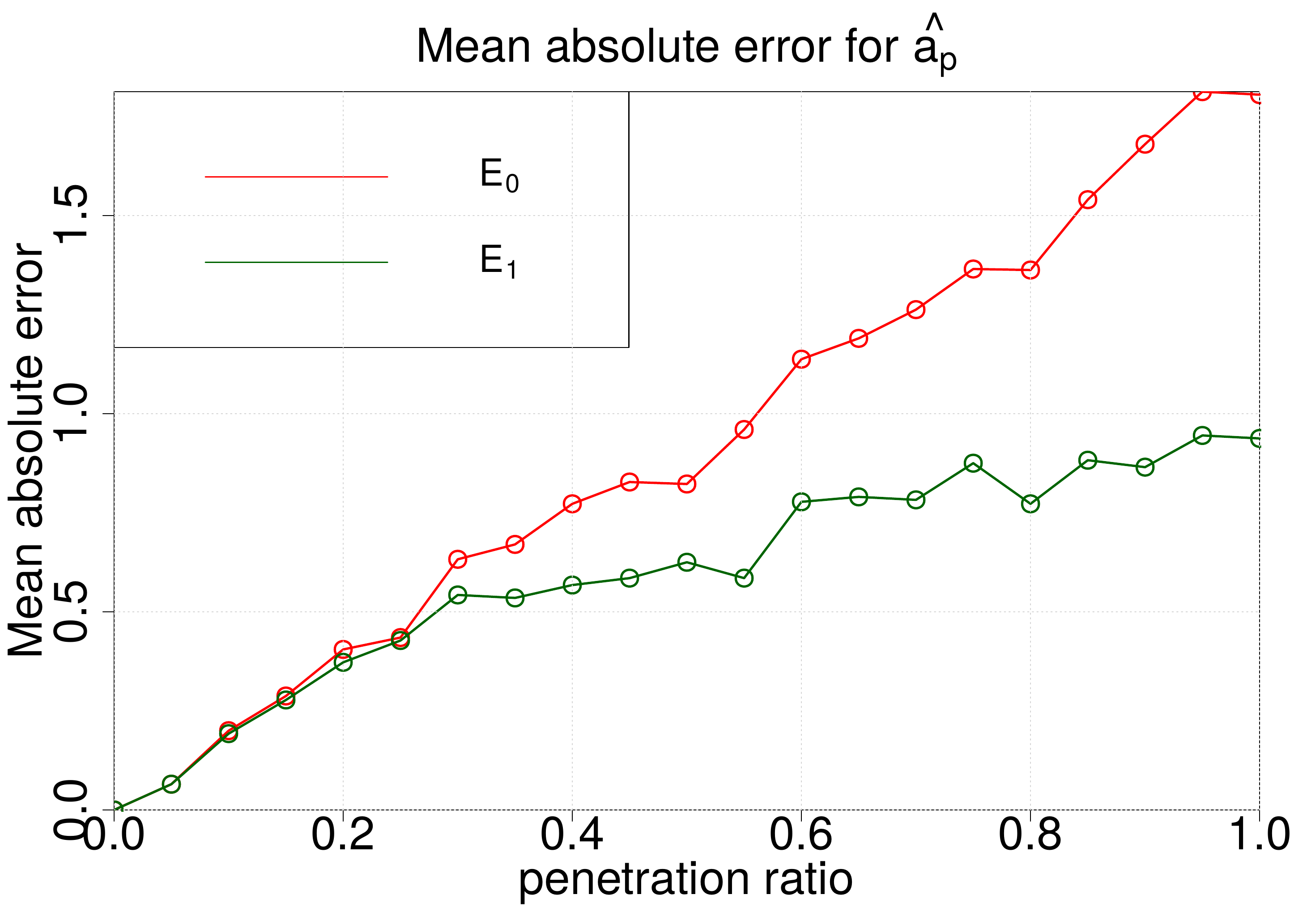}
\end{subfigure}
\begin{subfigure}[b]{\linewidth}
  \centering
  \includegraphics[width=\textwidth]{./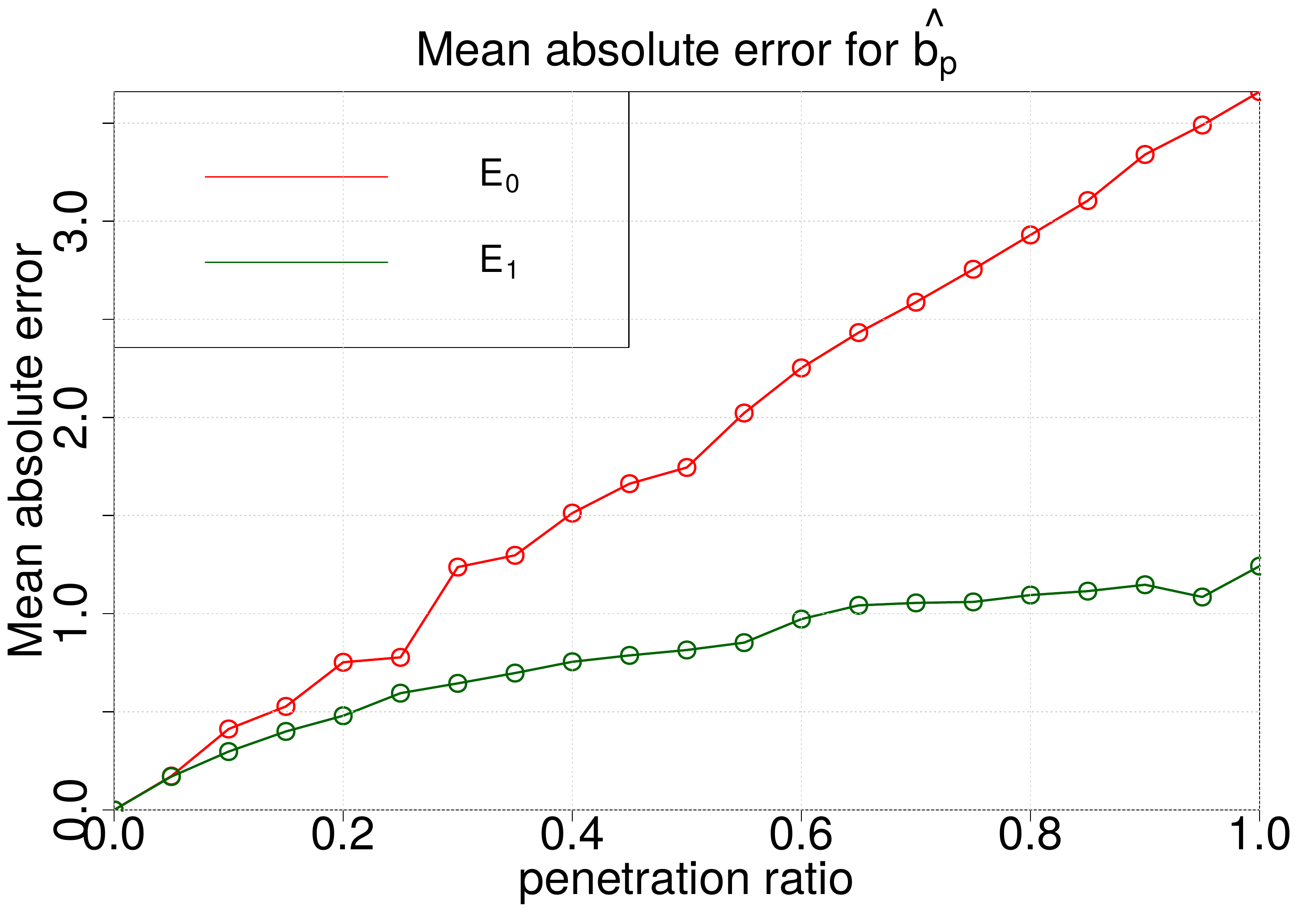}
\end{subfigure}
\begin{subfigure}[b]{\linewidth}
  \centering
  \includegraphics[width=\textwidth]{./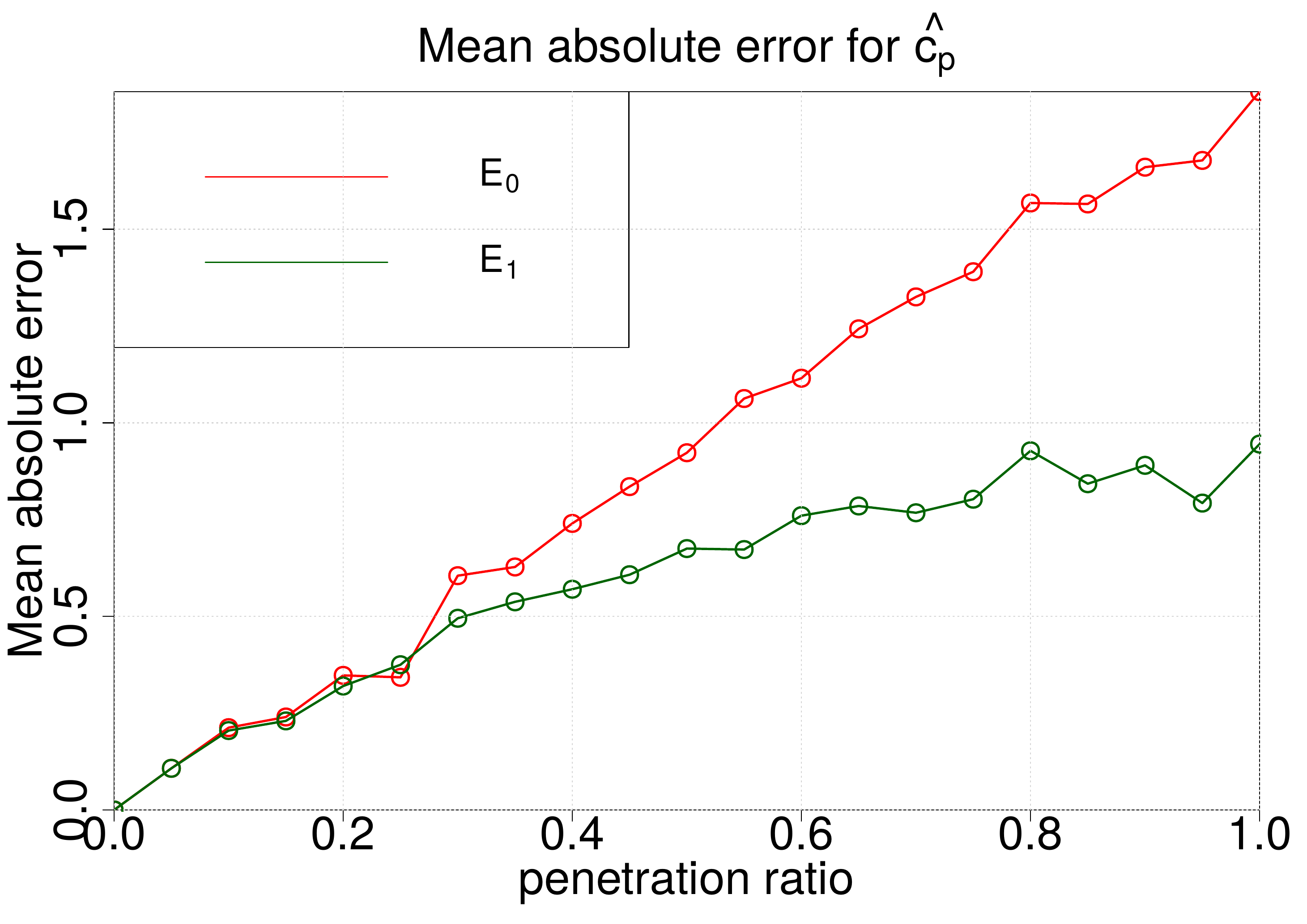}
\end{subfigure}
\caption{Mean absolute error for  $\hat{a_p}$,  $\hat{b_p}$,  $\hat{c_p}$ and estimators $E_0$, $E_1$, averaged over 10 replications. Simulated time = $10$ hours, asymmetric scenario.}
\label{fig:probes}
\end{figure}

\subsection{Road traffic state estimation}
\label{subsec:traffic_experiments}
In this section we evaluate the queue length estimations as given by the expectations of the probability distributions of Propositions~\ref{prop1},\ref{prop2}~and~\ref{prop3}.
We notice that Proposition~\ref{prop2} is extending the work by Comert~\cite{COMERT2009196} to the multi lanes case.

\begin{figure}
\centering
 \begin{subfigure}[b]{\linewidth}
  \centering
  \includegraphics[width=\textwidth]{./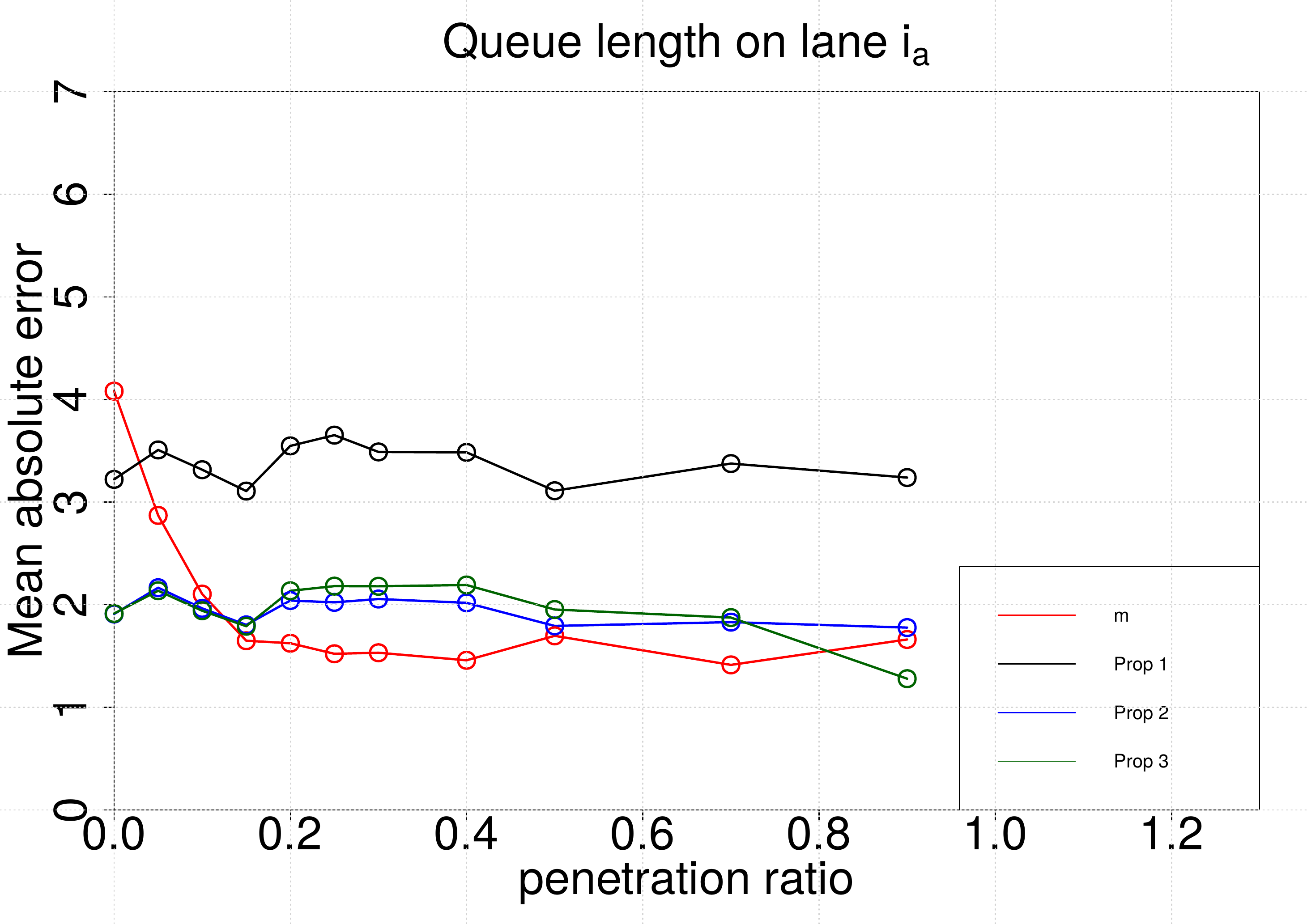}
\end{subfigure}
\begin{subfigure}[b]{\linewidth}
  \centering
  \includegraphics[width=\textwidth]{./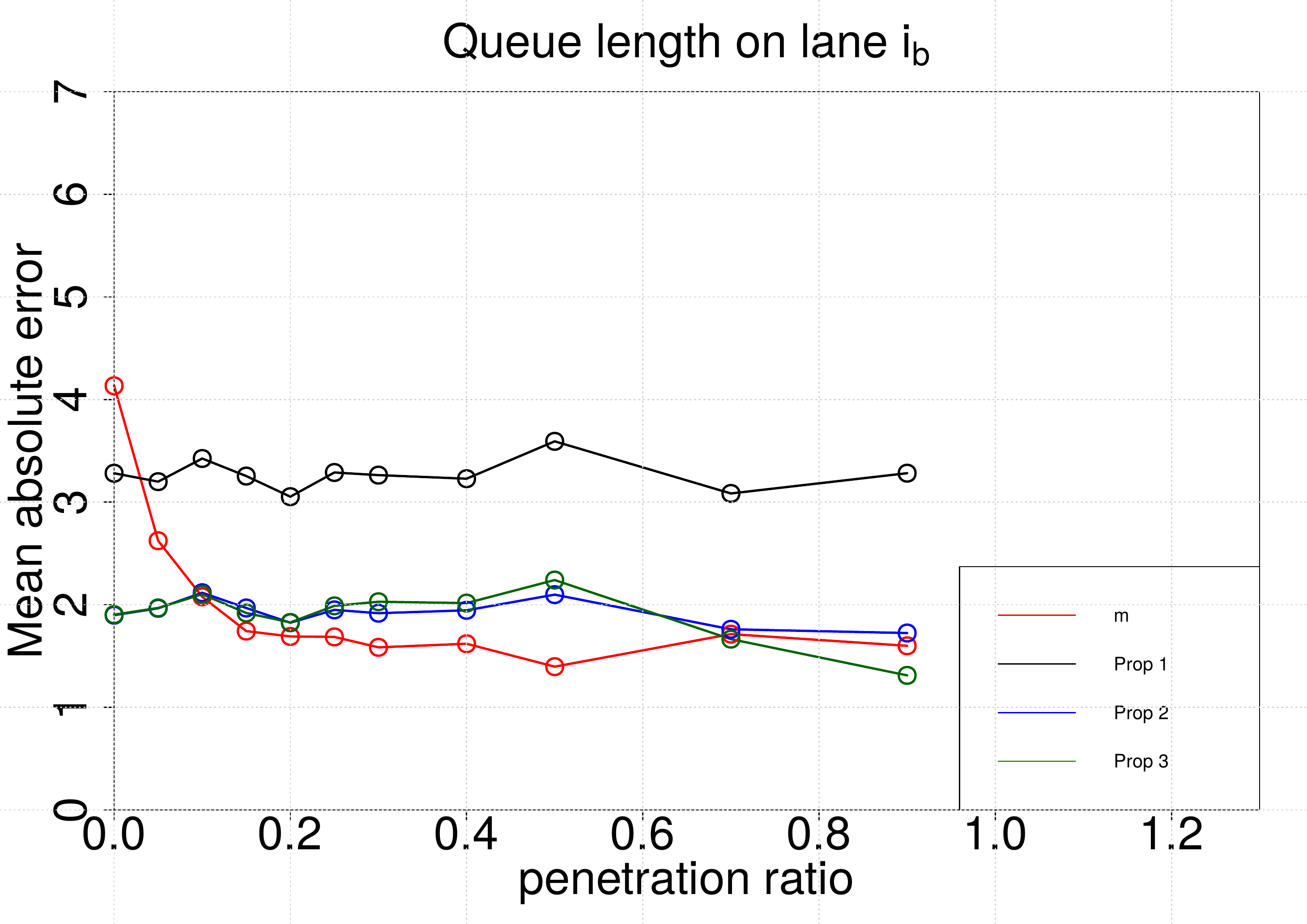}
\end{subfigure}
\begin{subfigure}[b]{\linewidth}
  \centering
  \includegraphics[width=\textwidth]{./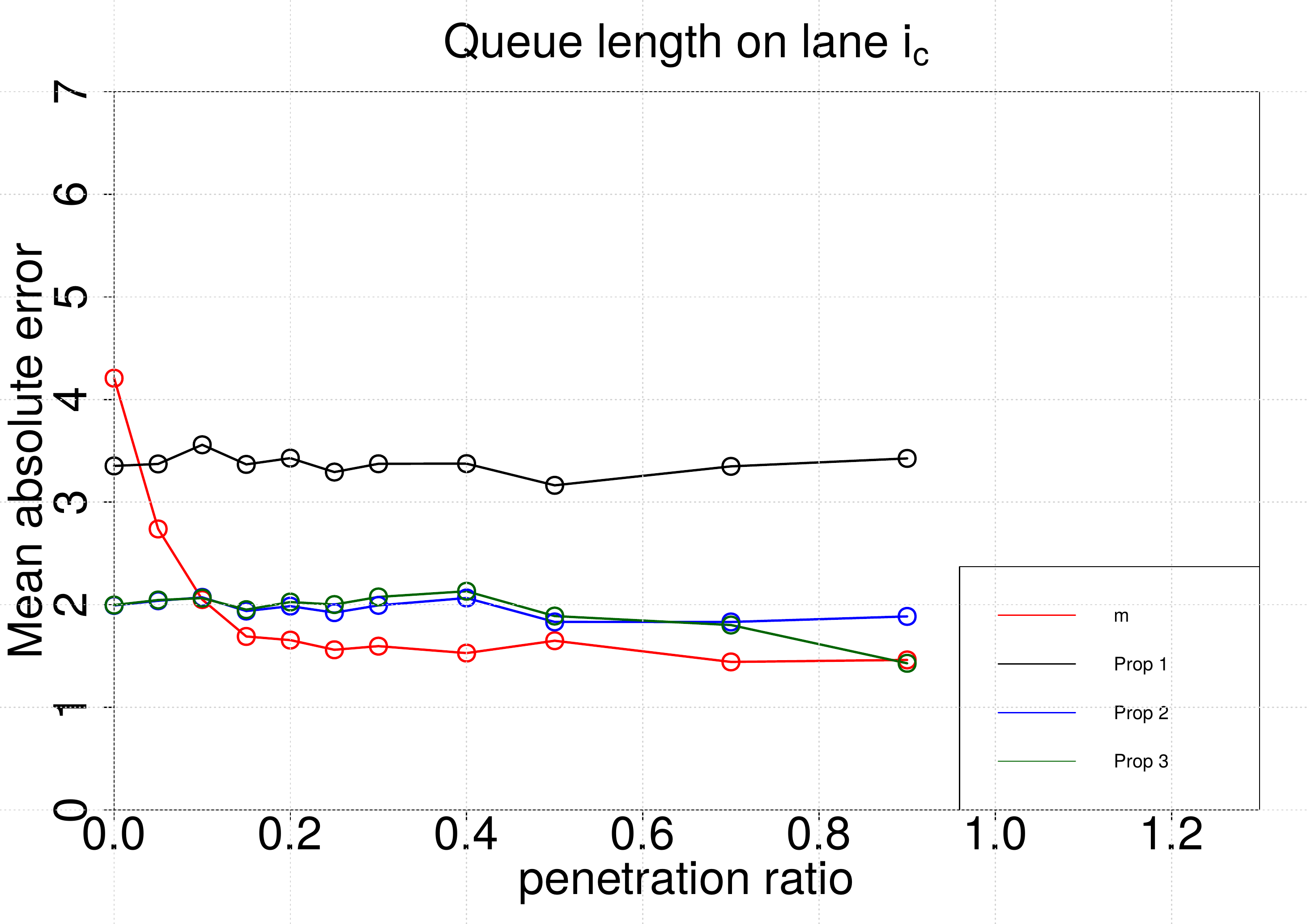}
\end{subfigure}
\caption{Mean absolute error for queue lengths on lanes $i_a$, $i_b$ and $i_c$, with Propositions 1, 2, 3. $m$ means estimating the queue length with $\hat{A}=\hat{B}=\hat{C}=m$ the last probe location. Simulated time = $2.5$ hours and symmetric scenario.}
\label{fig:prop}
\end{figure}

\begin{figure}
\centering
 \begin{subfigure}[b]{\linewidth}
  \centering
  \includegraphics[width=\textwidth]{./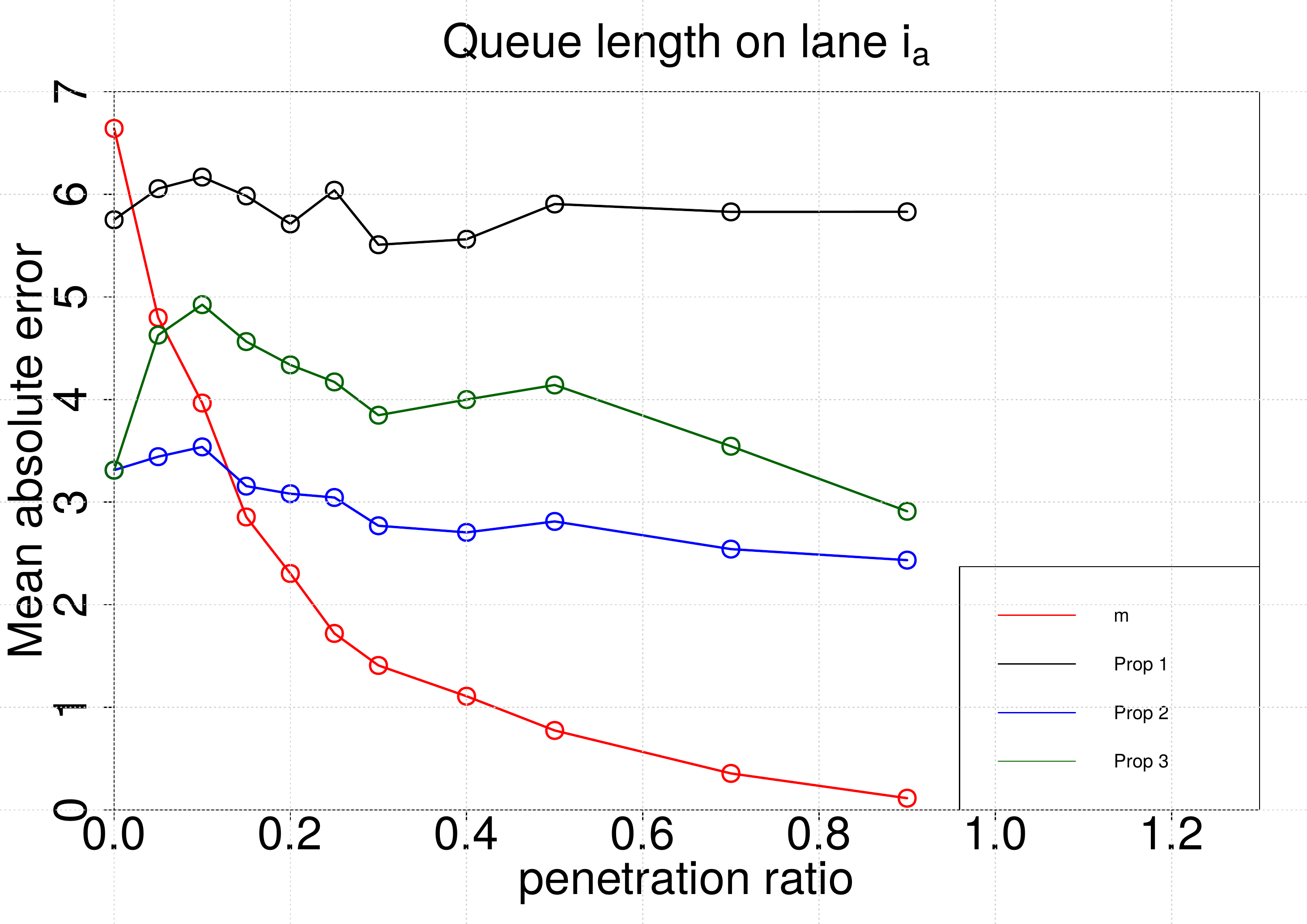}
\end{subfigure}
\begin{subfigure}[b]{\linewidth}
  \centering
  \includegraphics[width=\textwidth]{./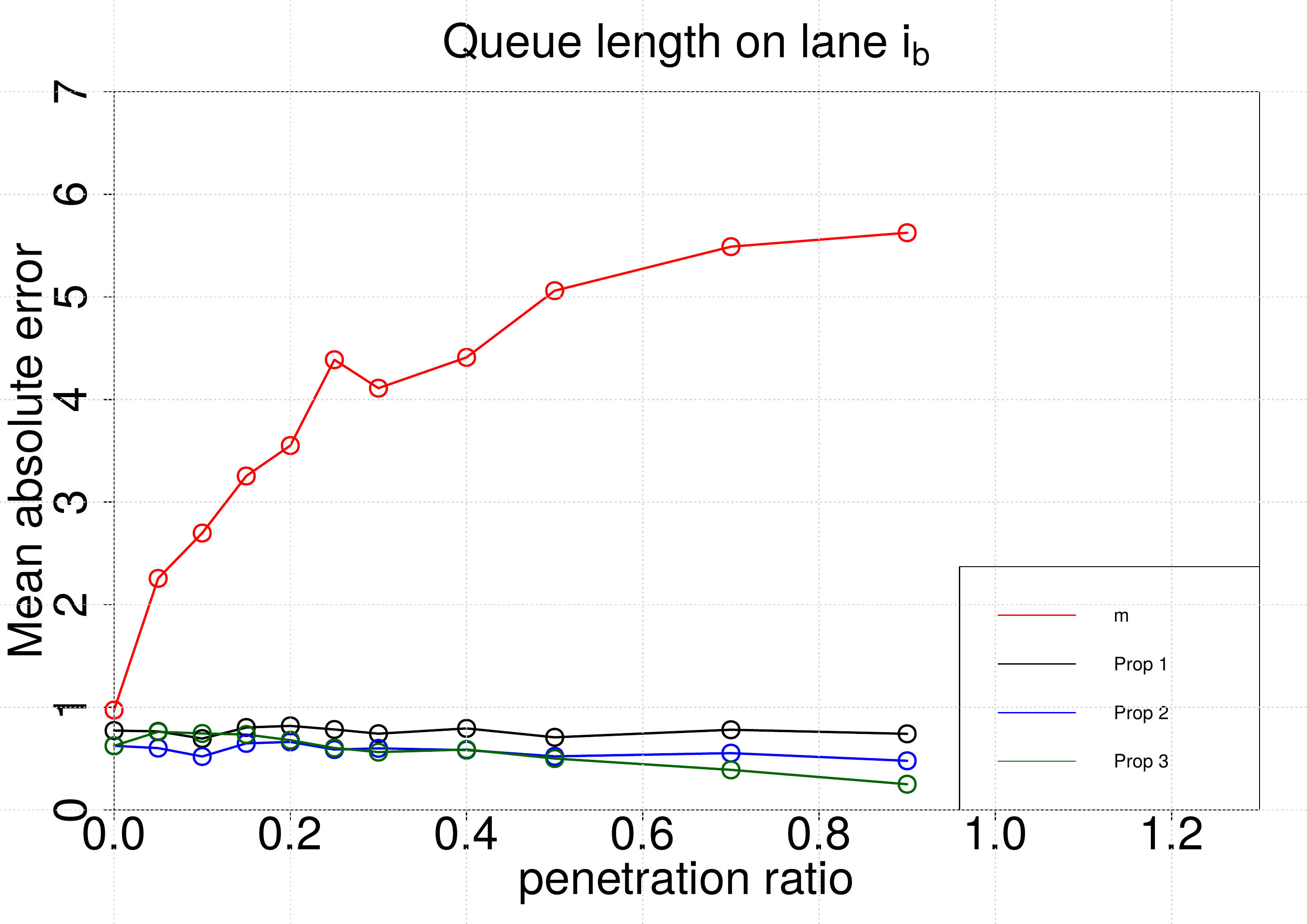}
\end{subfigure}
\begin{subfigure}[b]{\linewidth}
  \centering
  \includegraphics[width=\textwidth]{./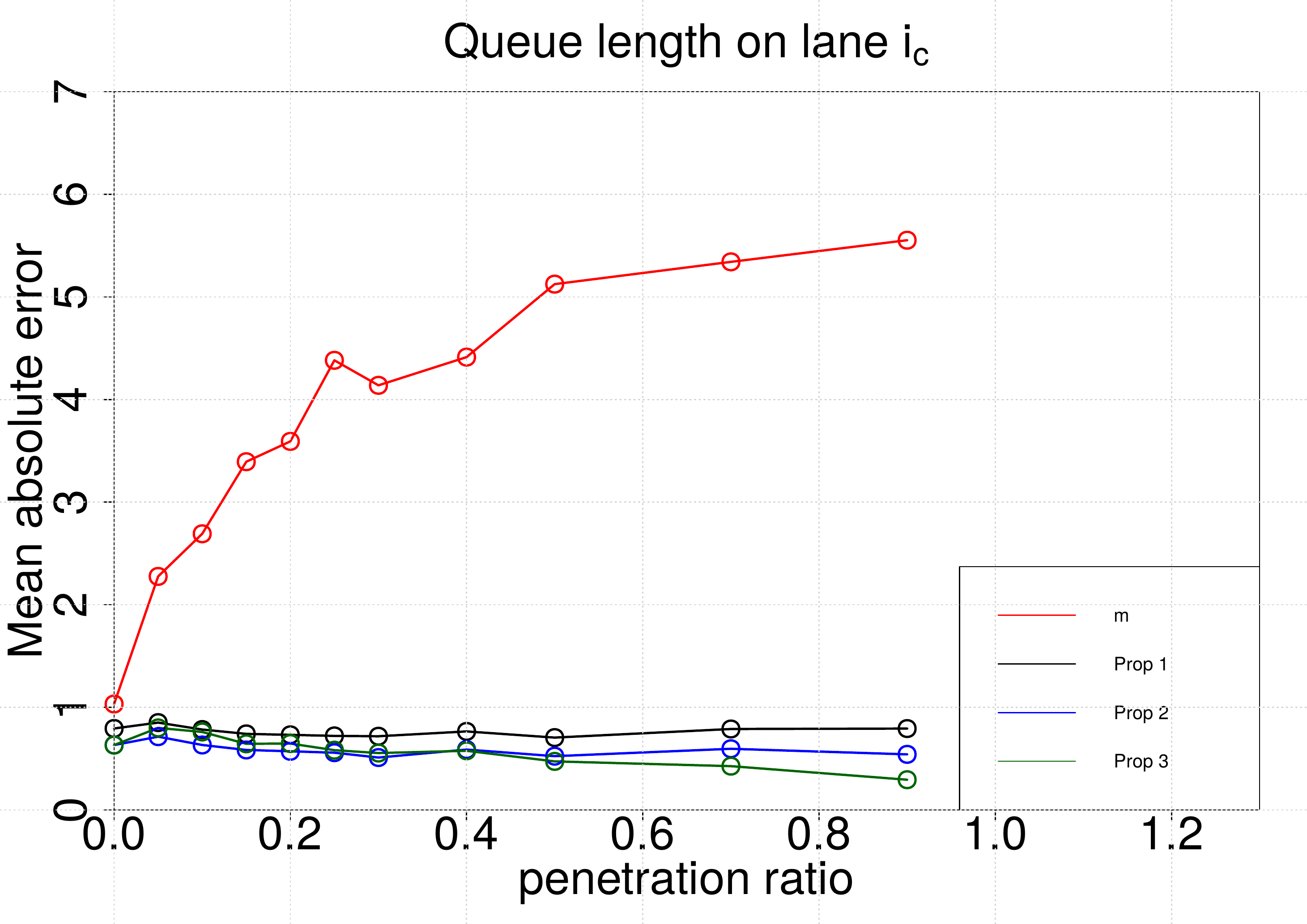}
\end{subfigure}
\caption{Mean absolute error for queue lengths on lanes $i_a$, $i_b$ and $i_c$, with Propositions 1, 2, 3. $m$ means estimating the queue length with $\hat{A}=\hat{B}=\hat{C}=m$ the last probe location. Simulated time = $2.5$ hours and asymmetric scenario.}
\label{fig:prop2}
\end{figure}

In Fig.~\ref{fig:prop} and Fig.~\ref{fig:prop2}, we compare the mean absolute error between the queue lengths and their estimated values with the different propositions respectively 
for scenario S1 and S2.
We notice on Fig.~\ref{fig:prop} that the estimator $m$ performs quite well for high penetration ratios, because we consider here the symmetric scenario.
On the other hand, we notice on Fig.~\ref{fig:prop} that the estimator $m$ is not accurate for low penetration ratios even in this symmetric scenario.
We notice also that Proposition~\ref{prop2} and Proposition~\ref{prop3} perform better than Proposition~\ref{prop1} because we add some information to the estimations.
In addition, we notice that all the estimators increase their accuracy as the penetration ratio of probe vehicles increases.

Let us now look at the simulation results of the asymmetric scenario.
We notice on  Fig.~\ref{fig:prop2} that although the estimator $m$ performs quite well on lane $i_a$, it is not true for lanes $i_b$ and $i_c$ because the demand is asymmetric.
The estimator $m$ is not an option for lanes $i_b$ and $i_c$. 
On another hand, Proposition~\ref{prop2} and Proposition~\ref{prop3} perform better than Proposition~\ref{prop1} for the all the lanes.
Proposition~\ref{prop2} performs better than Proposition~\ref{prop3} for the longest lane $i_a$ although for the lanes $i_b$ and $i_c$ Proposition~\ref{prop3}
is more accurate than Proposition~\ref{prop2}.
\section{Conclusion and perspectives}
\label{sec-conclusion}
A method to estimate the road traffic state at a multi-lanes road junction has been proposed.
We have given an estimator for the penetration ratio of communicating vehicles as well as the arrival rate of vehicles.
Based on the assumption that the queues tend to balance, we have derived an assignment matrix which gives the probabilities that a vehicle comes from an origin lane
and goes to a destination road.
We have extended an existing method for road traffic state estimation on 2-lanes roads, to the case where roads are composed of any number of lanes.
Three estimators for the queue lengths at the road junction have been proposed.
We have implemented the model with a discrete event simulator where vehicles are represented by packets.
Numerical experiments allow us to discuss the propositions and confirm that the model performs good especially for the asymmetric traffic demand scenarios.
Concerning the future works regarding the present paper, we notice that the road traffic demand was assumed to be moderate or low such that the arrival process can be considered as a Poisson process.
A future work could be to address the case of high traffic demand by taking into consideration the overflow queue.
On another hand, it seems relevant to have assumed that the vehicles tend to choose the lane with the shortest queue although it lacks real assignment data to confirm this assumption.
Finally, even if the GPS localization system becomes more accurate, there will be a need for robust models which can take into account inacurracies on the localization of probe vehicles.

\bibliographystyle{IEEEtran}
\bibliography{./estimation_nlanes}
\end{document}